\newtheorem{theorem}{Theorem}[section]
\newtheorem{lemma}{Lemma}[section]
\newtheorem{definition}{Definition}
\newtheorem{assumption}{Assumption}
\newtheorem{remark}{Remark}
\newcommand{\G}{\mathcal{G}}
\newcommand{\V}{\mathcal{V}}
\newcommand{\E}{\mathcal{E}}
\begin{document}

\title{Distributed Power Apportioning with Early Dispatch for Ancillary Services \\in Renewable Grids}

% Coordination of Distributed Energy Resources for providing Ancillary Services with Large scale Demonstration: A Distributed Framework 

\author{Sourav~Patel,~\IEEEmembership{Student Member,~IEEE,}
        Blake~Lundstrom,~\IEEEmembership{Senior Member,~IEEE,}
        Govind~Saraswat,~\IEEEmembership{Member,~IEEE,}
        Murti~V.~Salapaka,~\IEEEmembership{ Fellow,~IEEE}% <-this % stops a space
        \vspace{-1cm}
\thanks{This work was authored in part by the National Renewable Energy Laboratory, operated by Alliance for Sustainable Energy, LLC, for the U.S. Department of Energy (DOE) under Contract No. DE-AC36-08GO28308. Funding provided by the Advanced Research Projects Agency-Energy (ARPA-E) under grant no. DE-AR0000701. The views expressed in the article do not necessarily represent the views of the DOE or the U.S. Government. The U.S. Government retains and the publisher, by accepting the article for publication, acknowledges that the U.S. Government retains a nonexclusive, paid-up, irrevocable, worldwide license to publish or reproduce the published form of this work, or allow others to do so, for U.S. Government purposes.}%

\thanks{S. Patel and M. Salapaka are with the Department of Electrical and Computer Engineering, University of Minnesota, Minneapolis, 55455 MN, USA~
 e-mail: patel292@umn.edu, murtis@umn.edu}% <-this % stops a space <-this % stops a space
\thanks{B. Lundstrom and G. Saraswat are with the National Renewable Energy Laboratory, Golden, 80401 CO, USA %
e-mail: blake.lundstrom@nrel.gov,govind.saraswat@nrel.gov}}
\maketitle

% As a general rule, do not put math, special symbols or citations
% in the abstract or keywords.
\begin{abstract}
This article develops a distributed framework for coordinating distributed energy resources (DERs) in a power network to provide secondary frequency response (SFR) as an ancillary service to the bulk power system. A distributed finite-time protocol-based solution is adopted that allows each DER in the network to determine power reference commands. The distributed protocol respects information exchange constraints posed by a communication network layer while being robust to delays in the communications channels. The proposed framework enables coordinated response and control of the aggregated DERs by apportioning the share of generation that each DER needs to provide towards meeting any specified global SFR command while allowing for adjustments due to variability in generation and demand in order to prioritize renewable energy sources in the network. A novel early dispatch mechanism with brown start is synthesized to achieve initial DER response to changing SFR commands that is faster than state-of-the-art distributed approaches. The proposed power apportioning protocol is validated using an end-to-end power hardware-in-the-loop configuration at a distribution system scale with 40+ physical hardware DERs, underlying 7-MW power system model, a 250-DER communication topology with physical and simulated distributed controller nodes, varied communication protocols, and an underlying real-world power system model. Experimental results  demonstrate the efficacy of the proposed method toward distributed coordination of hundreds of DERs for providing fast response at SFR timescales.

% Analytical results were derived to show that aggregator command is met within a desired tolerance distributively by all the participating DERs while respecting local capacity constraints in finite-time making this protocol viable for real-world implementation. 
\end{abstract}

% Note that keywords are not normally used for peerreview papers.
\begin{IEEEkeywords}
Distributed power apportioning, finite-time ratio consensus, secondary frequency response, ancillary services
\end{IEEEkeywords}

\IEEEpeerreviewmaketitle
\vspace{-0.5cm}

\section{Introduction}
\IEEEPARstart{D}{}ispatch methodologies for deploying ancillary services for reliable operation of the modern grid are changing significantly as conventional baseload generating units are being replaced by a large number of smaller DERs scattered throughout the network (termed as \textit{decentralization)}. The addition of renewable energy sources (RES) such as, wind and photovoltaic (PV) generation, with increased  generation variability necessitates increased flexibility of dispatch operations \cite{hedayati2017reserve, ela2012studying}. %, khodayar2012hourly}. 
% Decarbonization is further fueled by growing targets of Renewable Portfolio Standards (RPS) in the US where various states are aiming to achieve high renewable generation around 10-30 \%  of total generation ($60\%$ in California and Hawaii by 2030) \cite{noauthor_energy_2015}.
 As a result, providing essential reliability services, such as, secondary frequency response (SFR), which corrects for imbalances between total generation and load in the system to restore frequency to its nominal value needs to emphasize DERs and RESs as a focus. An increase in power electronics-interfaced DERs, energy storage systems (ESS) and flexible loads interacting with the grid have shown promise in providing SFR, (for example,  the Electric Reliability Council of Texas's (ERCOT's) responsive reserve services (RRS)), thereby demonstrating the viability of wide-scale adoption of DERs for such applications \cite{miso,konidena2019ferc}. 
% For example, the Midcontinent Independent System Operator (MISO), a Regional transmission organization (RTO) operating in 15 states in the US, allows large-scale wind generation units as Dispatchable Intermittent Resources (DIRs) to participate in the real-time market like other conventional generators and has proposed to Federal Energy Regulatory Commission (FERC) for inclusion of both wind and PV to participate in the regulation market as flexible alternatives to higher-priced resources at the transmission level \cite{miso}. Similarly, FERC's Order 841 has already taken an initiative in considering the participation of ESS in the AS markets to encourage technological advancements in the DER-interfaced grid \cite{konidena2019ferc}. 
% A large number of DERs with potential for participation in distribution-level markets are present in the modern distribution system.
A tighter integration of the distribution system operators (DSO) with aggregators, independent power producers and a large number of residential/commercial units with DER assets (`prosumers') as emerging potential participants is exigent.

% Furthermore, North American Electric Reliability Corporation (NERC) BAL-001-1 standards requires SFR mechanisms to be deployed in order to meet the control performance standards. 
To achieve distribution-level SFR with a decentralized framework, a large number of DERs must be coordinated on a fast timescale. Earlier works in the literature, such as \cite{xi2018power}, focused on implementing a centralized control approach for coordinating DERs, wherein a secondary centralized controller at the distribution level collects states of DERs via a communication network and sends dispatch commands to local actuators. Such centralized approaches lack flexibility and scalability in providing SFR, and they require expensive high-performance computing and high-speed communication networks to meet SFR requirements satisfactorily. Also, centralized control approaches have added challenges of reduced resiliency. In order to mitigate these challenges, distributed control approaches are proposed with DERs as a multi-agent system (MAS) \cite{Rassa_2019}.
% \cite{dominguez2010coordination, Guo2015}.
\par Advantages of distributed approaches include coordination using only local computations, plug-and-play capability, and resiliency to node failures. A gather-broadcast method was presented in \cite{dorfler2017gather}, and \cite{zhao2015distributed} developed a distributed average integral method; however, these methods are often sensitive to gain coefficients and can result in slow convergence to dispatch outputs. Reference \cite{megel2017distributed} presented a distributed SFR approach but did not consider RESs and relies on heuristics to achieve dispatch requests. Distributed \textit{consensus}-based algorithms form a primary approach for many applications where distributed decision making is needed; in this article, achieving consensus between agents  on decision variables, forms a main thrust for realizing fast distributed SFR. Major challenges faced by consensus-based distributed approaches include: (i) inherent delays in communication channels that has significant impact on accuracy \cite{async_Antsaklis} and (ii) the asymptotic nature of convergence of the algorithms entails in principle that the result is known only with a infinite horizon. Here, for coordination for SFR, the result of the distributed algorithm will determine the power dispatch command of each individual DER. Thus there is a need for a stopping criterion at each DER, which can be utilized to terminate the consensus algorithm and for determining its dispatch decision. 
\par We now briefly describe the analytical frameworks on consensus based distributed approaches without any specific emphasis on the problem of distributed power apportioning.  Reference \cite{yang2019distributed} utilized the ratio-consensus algorithm to optimally coordinate DERs over time-varying directed communication networks for providing slower timescale tertiary support to the grid. However, \cite{yang2019distributed} does not address communication delays that plague any practical implementation. Moreover, as alluded to earlier, a challenge with distributed algorithms that exchange information with neighbors and update their state multiple times over many iterations is the need to ascertain when to stop iterating and use the decision parameter for a subsequent action, such as determining how much power a DER needs to dispatch. Here, if the detection of convergence can be achieved (within a pre-specified tolerance value) by the nodes distributedly, algorithm run-times longer than necessary can be avoided, making it possible to employ low computational footprint, low cost devices.
To circumvent the issue of asymptotic convergence, \cite{sundaram2007finite,tran2013distributed} proposed finite-time algorithms to compute the consensus value using network observability; 
%  have presented a method where each node can calculate the final consensus value using the minimal polynomial associated with the weight matrix in the state update iterations. 
%  However, to calculate the coefficients of the minimal polynomial each node has to run $N$ (total number of agents) different linear iterations each for at least $N+1$ time-steps.
%  Based on the above article, \cite{toulouse2019invariant} achieves finite time consensus based on the network observability matrix. Reference \cite{tran2013distributed} has proposed algorithms to compute the consensus value using the minimal number of observations of an arbitrary chosen node in a network.
% However, these methods rely on computing the rank of a given matrix and that of its kernel which make them computationally expensive. However, this algorithm assumes no delays are present in the communication channels. 
however, here, limitations of high computational footprint and large storage requirements at each node, render the approach unsuitable for applications, such as fast distributed SFR, where coordination with a fast response time is required.
Reference \cite{cady2015finite} proposed a distributed finite time termination of ratio consensus, which built on \cite{dominguez2010coordination} and \cite{yadav2007distributed}, for frequency regulation in a network of islanded ac microgrids, but it did not consider communication delays in its formulation. Distributed finite-time termination of ratio consensus in the presence of bounded delays are presented in our earlier work \cite{prakash2019distributed}.

\par All of the analytical works reported above are not instantiated to  large scale coordination of DERs; thus an effective framework for  distributed aggregation based SFR is currently absent.  We remark that in  \cite{dominguez2010coordination},  distributed apportioning was achieved for a few DERs. Here, the algorithms used did not address the issues of non-ideal nature of communication, nor any guarantees on the finite-time distributed stopping criterion are utilized. Without addressing these issues, the consensus based strategies remain inapplicable for practical sized  distributed DER aggregation goals.  Indeed, to the best of the authors' knowledge, no prior work or framework has demonstrated a provably guaranteed distributed method, with instantiation of a large scale coordination to achieve DER based SFR.  This article develops a scalable DSO-centric  framework towards coordinating large numbers (1000+) of DERs to provide support to system operators via aggregators in dispatching SFR. The framework meets the SFR ancillary demand (global objective) of the system operator by aggregating \textit{distribution-level} DERs in a distributed manner while respecting local capacity constraints of each DER. Coordinated response and control of aggregated DERs are achieved at SFR timescales: initial response times of less than 5 seconds and a ramp to response set point within 1 minute are demonstrated. %
% ERCOT uses dispatch instruction signals to deploy RRS with a command update rate of 4 sec
% %(requirement  R8  of  NERC  standard  BAL-005) 
% and expected ramp time of one minute following a deployment instruction, to update the AS schedule for RRS for generation and load resources \cite{ercot2019}.
%  with minimum ramp rate of 50 MW/min
% Readers are guided to \cite{sandia2012survey} for a survey on details of deployment of various SFR services present in NA-EM.
Moreover, the distributed power apportioning framework comes with guarantees on reaching desired and feasible dispatch decisions in finite-time, even when the communication suffers from uncertainties such as delays.
%  Figure \ref{fig_nodelay_nodes} (b) shows experimental results for implementing a finite-time ratio consensus algorithm for a 5 node network (shown in Fig. \ref{fig_nodelay_nodes} (a)) for distributed averaging of initial values without accounting for any communication delay as a result a convergence value of $225$ is achieved when the true average is $400$ (error = $43.75\%$). 
% Furthermore we address the communication delays in the network appropriately for successful implementation of distributed consensus algorithms and system stability when applied to critical real-time systems such as power networks.
% Ignoring communication delays could result in a destabilizing effect on the entire system or any part thereof; hence, it is important that communication delays are accounted for meeting performance specifications. 
The developed framework's efficacy is demonstrated using low-cost \textit{Raspberry Pi} (Rpi) devices (with local communication and computational intelligence capabilities) interfacing with an underlying power controller layer. The framework here is robust to the presence of bounded delays in the communication channels.
%  In addition, the coordination and control of a large number of DERs presents significant challenges 
%  when trying to meet the global objective of balancing generation and demand while meeting local constraints
%  such as -- i) the ability of DERs to participate in the AS market in a non-discriminatory, flexible and secure manner and ii) the ability to deploy mechanisms to achieve load-generation balance via commands and obtain response within SFR timescales.
%  The California ISO has identified the following potential provisions, imposed by the imminent high-DER transmission-distribution coordination, that need to be met \cite{de2015distribution}:\\
%     1) \textit{DSO/Aggregator level:} Ability to deploy mechanisms to modify DER behavior via commands to elicit responsive energy consumption targets to be held to a certain level over a period of time as well as requiring these to be regulated up or down every few seconds-minutes as a response to a system signal.\\
%     2) \textit{Participating DERs:} Ability to participate in the AS market in a non-discriminatory, flexible and secure manner.\\
    % The proposed framework will address these technical challenges, establish analytical guarantees and validate on a real-world system to demonstrate its viability for obtaining fast coordinated responses ($< 5$ s). 
 Further, early dispatch and brown-start mechanisms to enable participating DERs to respond to an SFR signal at timescales faster than the state-of-the-art distributed approaches are developed. %
We noe summarize the major contributions of this article:\\
1) A distributed framework for coordinated power apportioning in the presence of communication delays that preserves the privacy of private values (capacity, output power) of the participating DERs.\\
2) An extension of distributed stopping criteria \cite{yadav2007distributed} where each DER in the network can detect convergence within a tolerance independently % Moreover, the states communicated with neighbor DERs do not compromise private values such as total demand, capacity, current generation, thus, allowing for a secure and non-discriminatory participation of DERs in the network.\\  
to formulate a novel \textit{early dispatch} mechanism. This allows the network to achieve SFR faster than state-of-the-art distributed approaches (initial response time of less than 5 s and ramp time of 1 minute).\\
% and we leverage the property of the converging sub-sequences of the ratio consensus protocol to begin dispatch at earlier iterations instead of waiting for convergence.
% This enables DER responses  faster than state-of-the-art distributed approaches.\\
3) A brown-start approach whereby DERs, already participating in SFR, smoothly transition to new states in order to meet new system operator commands in the presence of changes in generation capacities, is developed. This contributions extends authors' work in \cite{patel2017distributed}.\\
4) A large scale validation employing 40+ physical hardware DERs and 250 nodes on a real-world distribution system model.
% Experimental results show that, with the apportioning protocol, dispatch objectives are met within the targeted metrics. 
To the best of our knowledge, distributed DER coordination and control at such a scale is not demonstrated earlier with a large-scale validation. 

\section{System Description}\label{sec:sys_desc}
This section describes the system under study to develop the resource apportioning problem. A detailed description of graph theory and linear algebra notions that are used in the subsequent development are available in \cite{diestel2000graph}.
% \vspace{-0.5cm}
% \subsection{Local Inverter Systems}
We consider DERs (such as a PV array and battery ESS), each interfaced with DC-AC inverter connected to local ac loads and a grid connection and to other units through a point of common coupling (PCC) in a microgrid or to the grid through an aggregator (see Fig. \ref{fig:PHILsystem}). 
\vspace{-0.3cm}
\subsection{Communication Network of DER Units}

% \begin{figure}[t]
% \centering
% \includegraphics[scale=0.5]{DistributionSystem.png}
% \caption{Distribution system network with communication graph overlay.}
% %\label{fig_6node_expt}
% \label{fig:Dist_system}
% \end{figure}
 In order to facilitate exchange of information to arrive at viable power commands for meeting ancillary demand services of the grid, we consider a network of DER units as a multi-agent system (MAS)  with  agents/nodes  interacting  with  their  neighbors,  over a  communication  infrastructure (\textit{Communication Layer} in Fig. \ref{fig:PHILsystem}). We consider a graph $\G=\{\V,\E\}$, where vertices $\V =\{1,2,\ldots,N\}$ denote the DER nodes in the network, and $\E \subseteq \V \times \V$ is the set of edges representing the communication topology overlay on the underlying power system infrastructure. DER units communicate with each other as dictated by the network topology where each node commuicates with its neighbors using multiple allowable communication modalities (wired, wireless, hybrid). Communication can be bidirectional or directional. In the network representation (see Fig.~\ref{fig:PHILsystem}), a bidirectional channel is represented using a bi-directed edge and a directed channel with a directed edge. The underlying communication (implemented here using Rpi devices) is uncertain suffering from delays.
  \begin{assumption}\label{assmp:connected}
  $\G=\{\V,\E\}$ is connected.
  \end{assumption}
 \begin{assumption}\label{assmp:delayBound}
For any node pair $i,j \in \V$ and $(i,j) \in \E$, the delay on the edge from node $j$ to node $i$, denoted as $\tau_{ij}$, satisfies: $\tau_{ij}\leq \bar{\tau}<\infty$, where $\bar{\tau} >0 $ .
 \end{assumption}
 
 Assumption \ref{assmp:delayBound} reflects the \textit{partial synchrony} condition for a real-world communication network in distributed computing. This framework also adheres to communication protocols that guarantee no packet loss, such as  Transmission Control Protocol-based websockets protocols. 
%  and  and the HIL validation is for the stochastic but bounded delay case.
 
 We consider, an \textit{aggregator} to be an entity interfacing with the DSO on one end and prosumers on the other. The aggregator is responsible for accumulating DERs available at the distribution level to provide a grid ancillary power command, $\rho_d$, to the network of DERs. Here, the aggregator can communicate the command $\rho_d$ only to nodes in its communication neighborhood; here the aggregator is assumed to have $l\geq 1$ neighbors. The demand signal considered here is the SFR signal where the following specifications have to be met (i) initial response time of less than 5 s, (ii) ramp response time to set point within 1 minute, (iii) maintaining the desired SFR command for a maximum time period of at least 
 30 minutes or as required by the system operator. DERs have the capability to rapidly adjust their dispatch set points to output the SFR commands. 
 
%  The global objective of the participating DERs in the network then is to apportion collectively the SFR commands. 

\begin{figure*}[t]
	\centering
\includegraphics[scale=0.105, trim={3cm 7cm 0cm 0cm},clip]{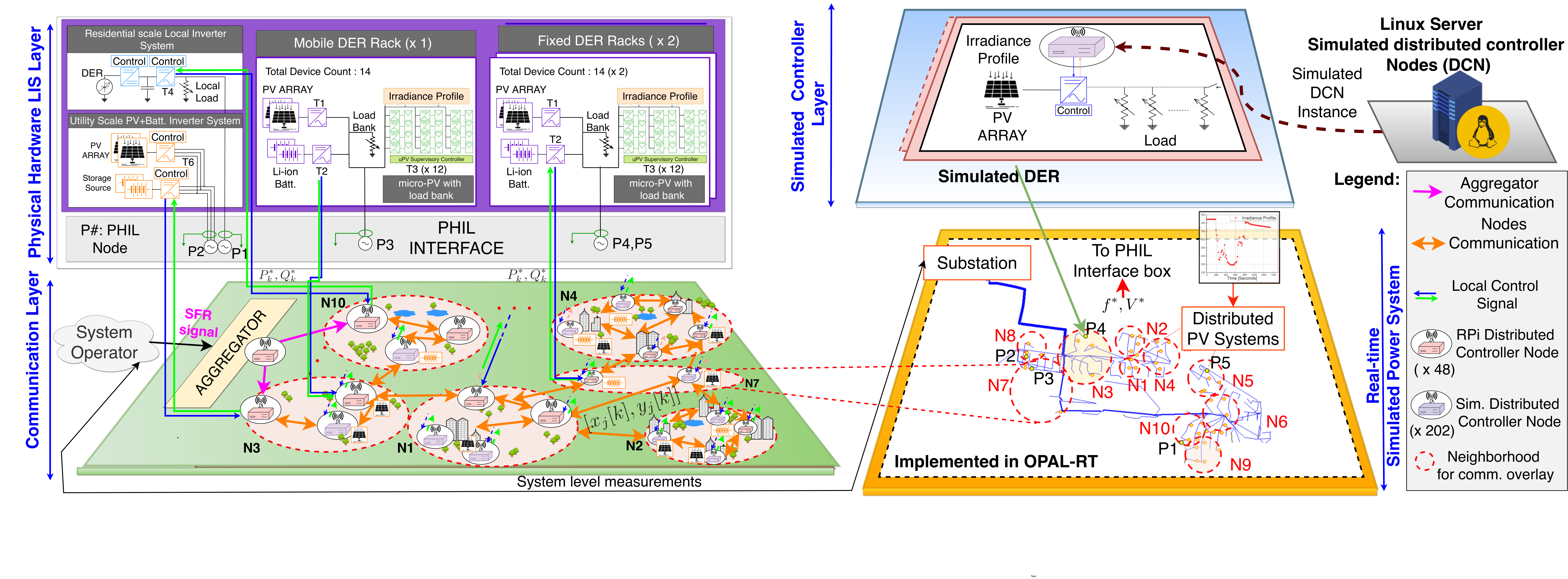}
\caption{\small{ Schematic diagram of the system under study, consisting of  residential, utility-scale DER units; distribution system model with interfaces to controllers with simulated DER units; medium-high power hardware-in-the-loop and aggregator interfacing the distribution system to the rest of the grid.} }
\label{fig:PHILsystem}
\end{figure*}
% \vspace{-0.5cm}
\subsection{Problem Formulation}
An aggregator, upon receiving a system operator's SFR signal (can be commanded manually or automatically with less than $ 6$ s refresh rates) at time instants $t_0< t_1< t_2< \ldots$, sends a power command, $\rho_d(t_m)$, at time instant $t_m$, where $m \in \{0,1,2, \ldots\}$, to one or more aggregator-facing DER in the network.  The $N$ participating DERs with a given graph topology, $\G=\{\V,\E\}$, need to collectively meet the aggregator command, $\rho_d(t_m)$, while communicating only to their neighboring (neighbors determined by the communication network layer) DER units and respecting their individual resource (generation) constraints.
Let $\pi_i^{max}(t_m)$ and $\pi_i^{min}(t_m)$ be the maximum and minimum generation capacities of the DERs at the $i^{th}$ DER, respectively, at instant $t_m$. % Consider the $i^{th}$ LIS unit at any time instant $t_m$, where, $\pi_i^{max}(t_m)$ and $\pi_i^{min}(t_m)$ are the generation maximum and minimum generation capacities of the DER at $\text{LIS}_i$ respectively. 
Let $\pi_i^{*}(t_m)$ denote the steady-state reference power command to be supplied by the $i^{th}$ DER in response to an aggregator command, $\rho_d(t_m)$. The resource apportioning problem can be formulated mathematically as: for any time instant $t_m$ determine;
\begin{align}\label{eq:resApportioning}
    &\Big\{\pi_i^*(t_m)\Big\}_{i=1}^N \nonumber\\
    \mbox{such that } &\sum_{i\in \V} \pi_{i}^{*}(t_m) = \rho_d(t_m) \\
       &\pi_i^{min}(t_m) \leq \pi_i^{*}(t_m) \leq \pi_i^{max}(t_m) \, \; \text{for all } i \in \V \nonumber. 
\end{align}
 Furthermore, the viability of the resource apportioning problem holds when  $\sum_{i=1}^N \pi_i^{min}(t)\leq \rho_{d}(t)\leq \sum_{i=1}^N \pi_i^{max}(t)$; otherwise for $\rho_d(t) \ge \sum_{i=1}^{N} \pi_i^{max}(t)$, we set $\pi_i^*(t) = \pi_i^{max}(t)$ for all $i \in \V$.

We remark that the resource apportioning problem (\ref{eq:resApportioning}) needs to be solved faster than the time interval between $t_{m-1}$ and $t_m$  when a new dispatch command is placed. In real-world power networks the maximum time to respond required by the system operator is in the order of minutes \cite{sandia2012survey}. The approximate solution to (\ref{eq:resApportioning}) is found based on a finite-time termination criteria using a user-selected tolerance parameter which provides some flexibility wherein faster convergence results with a larger tolerance.
% Next, we present a distributed iterative protocol which meets the demand $\rho_d(t_m)$ asymptotically while respecting the constraints of each agent.
\vspace{-0.35cm}
%%%%%%%%%%%%%%%%%%%%%%%%%%%%%%%%%%%%%%%%%%%%%%%%%%%%%%%%%%%%%%%%%%%%%%%
\section {Apportioning Using Distributed Averaging}\label{sec:apportioning}
The resource apportioning problem with constraints can be solved using average consensus protocols. We first summarize the distributed averaging protocol \cite{yadav2007distributed, hadjicostis2014average,prakash2019distributed}, and its extension toward solving the resource apportioning problem for the asymptotic case.
\vspace{-0.4cm}
\subsection{Distributed Averaging Protocol}
\begin{assumption}\label{assmp:Pmatrix}
 Let $p_{ij}$ denote the weight on information coming from node $j$ to node $i$. The weight matrix $P(i,j)=p_{ij} \geq 0$ associated with $\G$ is primitive (if it is irreducible and has only one eigenvalue of maximum modulus) and column stochastic (all elements of every column of the matrix sum to one).
\end{assumption}
\begin{assumption}\label{assmp:selfNode}
 Any node $i \in \V$ in $\G=(\V,\E)$ has access to its own value at any instant $k$ without any delay.
\end{assumption}
\begin{definition}(In-neighbor Set and Out-neighbor Set)
In a graph $\G:=(\V,\E)$, the in-neighbor set, $\mathcal{N}_j^{-}$,  and out-neighbor set, $\mathcal{N}_j^+$, of node $j \in \V$, are given as $\mathcal{N}_j^{-}:=\{i| (j,i)$ $ \in \E, i \neq j\}$ and $\mathcal{N}_j^{+}:=\{i| (j,i)$ $ \in \E, i \neq j\}$, respectively. The cardinality of the  out-neighbor set of a node $i \in \V$ is called the out-degree of the node denoted by $D_{i}^{+}$. 
\end{definition}
Consider the following update iterations for states $x$ and $y$ maintained by all nodes in the network:
\begin{align}
& \textstyle  x_{i}(k+1)=p_{ii}x_{i}(k)+\sum_{j\in\mathit{N_{i}}^{-}}p_{ij}x_{j}(k-\tau_{ij}), \label{eq: Num} \\ %
&\textstyle  y_{i}(k+1)=p_{ii}y_{i}(k)+\sum_{j\in\mathit{N_{i}}^{-}}p_{ij}y_{j}(k-\tau_{ij}),  \label{eq: Den}
\end{align}
where $\tau_{ij}$ is the delay in receiving data from node $j$ to node $i.$

% \begin{lemma}
%     Let $y_i(0) > 0 $ for all $i \in \V$ and $P$ satistifes Assumption \ref{assmp:Pmatrix}. Then, $y_i(k) >0$ for all $k \in \mathbb{N}, i \in \V$.
% \end{lemma}
% \begin{proof}
%   See \cite{prakash2018distributed} for proof.
% \end{proof}
\begin{theorem} \label{thm:avg_con}
(\textit{Ratio Consensus} \cite{hadjicostis2014average}) Suppose Assumptions \ref{assmp:connected}\textendash\ref{assmp:selfNode} hold. Let the initial conditions for the numerator states be  given as $x(0)=[x_{1}(0)\ x_{2}(0) \ldots x_{N}(0)]^{T}$ and $y(0)=[y_{1}(0)\ y_{2}(0) \ldots y_{N}(0)]^{T}$. Then the ratio $\frac{x_{i}(k)}{y_i(k)}$ asymptotically converges to $\frac{{\sum_{i=1}^{N}}x_{i}(0)}{{\sum_{i=1}^{N}}y_{i}(0)}$
for all $i=1,...,N$. 
\end{theorem}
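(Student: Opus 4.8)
The plan is to convert the delayed recursions \eqref{eq: Num}--\eqref{eq: Den} into an equivalent \emph{delay-free} linear system on an augmented state, and then apply the Perron--Frobenius theorem to the resulting nonnegative matrix. First I would exploit Assumption~\ref{assmp:delayBound}: since every delay is bounded by $\bar\tau < \infty$, one can split $P$ as $P = \sum_{r=0}^{\bar\tau} P_r$, where $P_r \ge 0$ carries the weights applied to data that is exactly $r$ steps old, so that \eqref{eq: Num} reads $x(k+1) = \sum_{r=0}^{\bar\tau} P_r\, x(k-r)$ and likewise for $y$. The ``in-transit'' quantities are then realized by adjoining, per delayed edge, a finite chain of auxiliary buffer states: a buffer holding data to be delivered in $s$ more steps forwards \emph{all} of it, with weight $1$, to the buffer that delivers in $s-1$ steps, and the terminal buffer injects it into the destination node. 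Stacking the $N$ real-node states together with these finitely many buffer states yields a vector $\tilde x(k)$ obeying $\tilde x(k+1) = M\tilde x(k)$ for a fixed nonnegative matrix $M$, and the \emph{same} $M$ governs the augmented denominator vector $\tilde y(k)$.

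Second, I would establish the two structural properties of $M$ on which everything hinges. \emph{Column stochasticity}: a real node $j$ emits exactly the weights $\{p_{ij} : i \in \mathcal{N}_j^+ \cup \{j\}\}$, whose sum is $\sum_i p_{ij} = 1$ by Assumption~\ref{assmp:Pmatrix}, while each buffer state has a single outgoing edge of weight $1$; hence every column of $M$ sums to one, i.e.\ $\mathbf{1}^\top M = \mathbf{1}^\top$, which expresses conservation of ``mass'' in the network including packets in transit. \emph{Primitivity}: because $\mathcal{G}$ is connected (Assumption~\ref{assmp:connected}) and $P$ is irreducible (Assumption~\ref{assmp:Pmatrix}), every real node reaches every other real node; the buffer states lie on these paths and are themselves reached from, and reach, a real node, so the augmented digraph is strongly connected and $M$ is irreducible, while the surviving length-one self-cycles at nodes with $p_{ii} > 0$ keep it aperiodic. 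Thus $M$ is primitive and column stochastic, so $\rho(M) = 1$ is a simple eigenvalue, $\mathbf{1}$ is the associated left eigenvector, and there is a strictly positive right eigenvector $\nu$ with $M\nu = \nu$.

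Third, I would invoke $\lim_{k\to\infty} M^k = \dfrac{\nu\,\mathbf{1}^\top}{\mathbf{1}^\top\nu}$. Initializing every buffer state at $0$ (nothing is in transit before the protocol starts) gives $\mathbf{1}^\top\tilde x(0) = \sum_{j=1}^N x_j(0)$ and $\mathbf{1}^\top\tilde y(0) = \sum_{j=1}^N y_j(0)$, so for each real node $i$,
\[
\lim_{k\to\infty} x_i(k) = \frac{\nu_i}{\mathbf{1}^\top\nu}\sum_{j=1}^N x_j(0), \qquad \lim_{k\to\infty} y_i(k) = \frac{\nu_i}{\mathbf{1}^\top\nu}\sum_{j=1}^N y_j(0).
\]
Since $\nu_i > 0$ and (under the usual positive initialization of $y$ together with $p_{ii} > 0$) one has $y_i(k) \ge p_{ii}^{\,k}\, y_i(0) > 0$ for all $k$, the ratio is well defined along the trajectory, and dividing the two limits cancels the common factor $\nu_i/(\mathbf{1}^\top\nu)$ to give $\lim_{k\to\infty} x_i(k)/y_i(k) = \big(\sum_{j=1}^N x_j(0)\big)/\big(\sum_{j=1}^N y_j(0)\big)$, as claimed.

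The step I expect to be the main obstacle is the delay bookkeeping behind the first two paragraphs: designing the augmentation so that no quantity is created or destroyed in any step — which is precisely what forces $M$ to be column stochastic — and checking that attaching the buffer chains neither destroys irreducibility nor introduces periodicity. Once $M$ is available with these two properties, the conclusion follows by applying Perron--Frobenius identically to the numerator and denominator iterations, essentially as in \cite{hadjicostis2014average,prakash2019distributed}.
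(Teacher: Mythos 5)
Your proposal is correct and follows essentially the same route as the paper, which does not reprove the result but cites \cite{hadjicostis2014average}, where the argument is exactly the delay-augmentation you describe: virtual buffer states per delayed edge, column stochasticity (mass conservation) of the augmented matrix, and a Perron--Frobenius/ergodicity step giving convergence of the ratio. Your reliance on self-loops $p_{ii}>0$ for aperiodicity of the augmented matrix is consistent with the paper's distributed synthesis, which sets $p_{ii}=1/(D_i^{+}+1)>0$, and your positivity bound $y_i(k)\geq p_{ii}^{k}y_i(0)>0$ matches the paper's Remark on well-definedness of the ratio.
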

\begin{remark}
 The ratio $x_i(k)/y_i(k)$ is only well-defined when $y_i(k)>0$, which is guaranteed when $y_i(0)>0$ for all $i \in \V$.
\end{remark}
\begin{remark} (Distributed Synthesis)
The weight matrix being column stochastic enables the weights $p_{ij}$ to be chosen in a purely distributed manner. A simple scheme is that $i^{th}$ node sets the weights $p_{ji} = \frac{1}{D_i^{+} + 1}$ for all $j \in \{\mathcal{N}_{i}^{+} \cup {i}\}$  and communicates to node $j$, $\sigma_{ji}(k)=p_{ji}x_i(k)$. Each node $j$ executes $x_j(k)=p_{jj}(k)+\sum_{i} \sigma_{ji}(k-\tau_{ji})$ thus realizing (\ref{eq: Num}) and (\ref{eq: Den}).
\end{remark}
% \vspace{-0.83cm}
\subsection{Power Apportioning Protocol}
Let the DER communication network be represented by a graph $\G(\V,\E)$. Let $\mathcal{N}_d$ denote the set, $l = \vert \mathcal{N}_d \vert$, of nodes directly communicating with the aggregator, referred to as \textit{command circulating nodes}. Upon receiving a DSO signal for the dispatch of DER as part of the SFR at any time instant $t_m$, the aggregator communicates the ancillary service request, $\rho_d(t_m)$, to the network of DERs by communicating to the command circulating nodes in $\mathcal{N}_d$. At each time instant $t_m$, each node $i$ initializes two states $[r_i(0),s_i(0)]^{T}$ such that:
\begin{align}
    r_i(0) &=
\begin{cases}
    \dfrac{\textstyle  \rho_d(t_m)}{l} - \textstyle  \pi_i^{min}(t_m), \ \text{if} \ i \in \mathcal{N}_d, \\
 - \textstyle  \pi_i^{min}(t_m), \ \text{if} \ i \not\in \mathcal{N}_d ,\\
\end{cases} \label{eq:initialize1}\\
s_i(0) & = \pi_i^{max}-\pi_i^{min}, ~\text{for all } i \in \V.
\end{align}
Each node executes versions of (\ref{eq: Num}) and (\ref{eq: Den}) as described below: 
\begin{align}
r_i(k+1) &= \textstyle  p_{ii}r_i(k) +  \sum_{j \in N_i^{-}} p_{ij} r_j(k-\tau_{ij}),\label{eq:num1}\\
s_i(k+1) &= \textstyle p_{ii}s_i(k)+ \sum_{j \in N_i^{-}} p_{ij} s_j(k-\tau_{ij}),\label{eq:den1}
\end{align}
where, $\pi_i^{max}(t_m)$ and $\pi_i^{min}(t_m)$ denote the maximum and minimum power capacity of the $i^{th}$ DER at time $t_m$. We remark that, $s_i(0) >0 $ for all $i \in \V$ as $\pi_i^{max}(t_m) > \pi_i^{min}(t_m)$.

\begin{lemma}\label{lem:powerApprProtocol}
 Under Assumptions \ref{assmp:connected} \textendash \ref{assmp:selfNode}, the power apportioning protocol  (\ref{eq:initialize1}) \textendash (\ref{eq:den1}) converges asymptotically, i.e.:
\begin{align}\label{eq:convergence}
     \lim_{k \to \infty} \dfrac{r_i(k)}{s_i(k)} &\rightarrow \dfrac{\textstyle\sum_{i \in \mathcal{N}_d} \Big(\dfrac{\rho_d(t_m)}{l}\Big) - \textstyle\sum\limits_{i=1}^N  \pi_i^{min}(t_m)}{\textstyle\sum_{i=1}^N (\pi_i^{max}(t_m)-\pi_i^{min}(t_m))},
\end{align}
for each node $i \in \V$.
\end{lemma}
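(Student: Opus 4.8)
The plan is to observe that the recursions (\ref{eq:num1}) and (\ref{eq:den1}) are, term by term, instances of the ratio-consensus recursions (\ref{eq: Num}) and (\ref{eq: Den}): the numerator state $x_i(k)$ is renamed $r_i(k)$, the denominator state $y_i(k)$ is renamed $s_i(k)$, the same column-stochastic primitive weight matrix $P$ is used, and the same bounded delays $\tau_{ij}$ appear. Hence the argument reduces to invoking Theorem \ref{thm:avg_con} with the initial vectors $r(0)$ and $s(0)$ of (\ref{eq:initialize1}), followed by elementary bookkeeping of the initial sums.

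First I would verify the hypotheses: Assumptions \ref{assmp:connected}--\ref{assmp:selfNode} are precisely those assumed in the lemma, and the boundedness of delays (Assumption \ref{assmp:delayBound}) is what lets the delayed iterates be handled by the delay-free lifting underlying Theorem \ref{thm:avg_con}, so no extra condition is needed. I would also note that $s_i(0)=\pi_i^{max}(t_m)-\pi_i^{min}(t_m)>0$ for every $i\in\V$ (since $\pi_i^{max}(t_m)>\pi_i^{min}(t_m)$), hence $\sum_{i=1}^N s_i(0)>0$, and by the remark following Theorem \ref{thm:avg_con} the denominator satisfies $s_i(k)>0$ for all $k$, so the ratio $r_i(k)/s_i(k)$ is well defined along the entire trajectory. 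Applying Theorem \ref{thm:avg_con} then gives $\lim_{k\to\infty} r_i(k)/s_i(k) = \big(\sum_{i=1}^N r_i(0)\big)/\big(\sum_{i=1}^N s_i(0)\big)$ for every $i\in\V$.

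It remains to evaluate the two sums. Splitting the numerator sum over $\mathcal{N}_d$ and $\V\setminus\mathcal{N}_d$ and using (\ref{eq:initialize1}) yields $\sum_{i=1}^N r_i(0) = \sum_{i\in\mathcal{N}_d}\big(\rho_d(t_m)/l\big) - \sum_{i=1}^N \pi_i^{min}(t_m)$, while the denominator sum is directly $\sum_{i=1}^N\big(\pi_i^{max}(t_m)-\pi_i^{min}(t_m)\big)$; substituting these into the limit gives exactly (\ref{eq:convergence}). There is no genuinely hard step here: the content is the reduction to the already-established ratio-consensus result together with the initial-condition arithmetic; the only points needing a little care are confirming that the delay terms in (\ref{eq:num1})--(\ref{eq:den1}) are covered by Theorem \ref{thm:avg_con} under the bounded-delay assumption, and maintaining positivity of $s_i(k)$ so that the limit of the ratio legitimately equals the ratio of the limiting sums.
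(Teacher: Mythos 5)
Your proposal is correct and follows essentially the same route as the paper's proof: compute $\sum_i r_i(0)$ and $\sum_i s_i(0)$ from the initialization (\ref{eq:initialize1}) and apply Theorem \ref{thm:avg_con}. The extra care you take about positivity of $s_i(k)$ and the delay hypotheses is consistent with the paper's remarks and does not change the argument.
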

\begin{proof}
The result can be obtained directly by noting that for a given time instant $t_m$:\\ $\sum_{i=1}^N r_i(0) =$ $\scriptstyle \sum_{i \in \mathcal{N}_d} \Big(\dfrac{\scriptstyle\rho_d(t_m)}{l}-\pi_i^{min}(t_m)\Big) + \sum_{i \notin \mathcal{N}_d} (- \pi_i^{min}(t_m))$ and $\sum_{i=1}^N s_i(0) = \sum_{i=1}^N (\pi_i^{max}(t_m)-\pi_i^{min}(t_m))$ and applying Theorem \ref{thm:avg_con}.
\end{proof}
\begin{theorem}\label{thm:resouce_allocation}
Let the power reference command for the $i^{th}$ DER due to aggregator command at $t_m$ be defined as, $\textstyle \pi_i^{*} := \pi_i^{min}(t_m) + \lim_{k\rightarrow \infty}\frac{r_i(k)}{s_i(k)}(\pi_i^{max}(t_m) - \pi_i^{min}(t_m)). \ \text{Then} \ \sum_{i=1}^N \pi_i^{*} = \rho_{d}(t_m)$ and $\pi_i^{min}(t_m) \leq \pi_i^* \leq \pi_i^{max}(t_m)$ for all $i\in \V$. 
\end{theorem}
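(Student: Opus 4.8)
The plan is to leverage Lemma~\ref{lem:powerApprProtocol}, which already pins down the common limiting ratio, and then reduce both assertions to elementary algebra together with the viability hypothesis stated right after~(\ref{eq:resApportioning}).

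First I would introduce the shorthand $\lambda := \lim_{k\to\infty} r_i(k)/s_i(k)$. The crucial point is that, by Lemma~\ref{lem:powerApprProtocol}, this scalar is \emph{the same} for every $i\in\V$ (consensus makes $\lambda$ node-independent), and it is well-defined because $s_i(0)>0$ for all $i$ forces $s_i(k)>0$ for all $k$ by the Remark following Theorem~\ref{thm:avg_con}. Since $l=|\mathcal{N}_d|$, we have $\sum_{i\in\mathcal{N}_d}\big(\rho_d(t_m)/l\big) = \rho_d(t_m)$, so~(\ref{eq:convergence}) collapses to
\[
\lambda = \frac{\rho_d(t_m) - \sum_{i=1}^N \pi_i^{min}(t_m)}{\sum_{i=1}^N\big(\pi_i^{max}(t_m) - \pi_i^{min}(t_m)\big)}.
\]

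For the summation identity I would sum the defining expression $\pi_i^* = \pi_i^{min}(t_m) + \lambda\big(\pi_i^{max}(t_m)-\pi_i^{min}(t_m)\big)$ over $i\in\V$, factor the common $\lambda$ out of the sum, substitute the closed form above, and observe that the $\sum_i \pi_i^{min}(t_m)$ contributions cancel, leaving exactly $\rho_d(t_m)$. For the feasibility bounds I would first show $\lambda\in[0,1]$: the denominator is strictly positive because $\pi_i^{max}(t_m)>\pi_i^{min}(t_m)$ for each $i$ (already noted for $s_i(0)>0$), while the viability condition $\sum_i \pi_i^{min}(t_m)\le \rho_d(t_m)\le \sum_i \pi_i^{max}(t_m)$ makes the numerator nonnegative and no larger than the denominator. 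Then writing $\pi_i^* = (1-\lambda)\,\pi_i^{min}(t_m) + \lambda\,\pi_i^{max}(t_m)$ exhibits $\pi_i^*$ as a convex combination of $\pi_i^{min}(t_m)$ and $\pi_i^{max}(t_m)$, hence $\pi_i^{min}(t_m)\le \pi_i^*\le \pi_i^{max}(t_m)$.

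The argument is essentially bookkeeping, so I do not anticipate a real obstacle; the only points requiring attention are invoking the viability hypothesis explicitly (without it $\lambda$ need not lie in $[0,1]$ and the box constraints could fail) and making sure the limit used in the definition of $\pi_i^*$ exists, which is exactly what Lemma~\ref{lem:powerApprProtocol} supplies.
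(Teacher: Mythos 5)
Your proposal is correct. Note that the paper itself gives no argument for this theorem---the proof is simply a citation to the authors' earlier work \cite{patel2017distributed}---so your write-up supplies the details rather than paralleling a proof printed here. The route you take is the natural one and almost certainly mirrors the cited derivation: Lemma~\ref{lem:powerApprProtocol} gives a node-independent limit $\lambda$, the factor $\sum_{i\in\mathcal{N}_d}\rho_d(t_m)/l=\rho_d(t_m)$ collapses the numerator, summing $\pi_i^{*}=\pi_i^{min}+\lambda(\pi_i^{max}-\pi_i^{min})$ over $i$ cancels the $\pi_i^{min}$ terms and returns $\rho_d(t_m)$, and writing $\pi_i^{*}=(1-\lambda)\pi_i^{min}+\lambda\,\pi_i^{max}$ reduces the box constraints to $\lambda\in[0,1]$. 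You are also right to flag the one point the theorem statement leaves implicit: $\lambda\in[0,1]$ requires the viability condition $\sum_i\pi_i^{min}(t_m)\le\rho_d(t_m)\le\sum_i\pi_i^{max}(t_m)$ stated after~(\ref{eq:resApportioning}) together with strict positivity of the denominator ($\pi_i^{max}>\pi_i^{min}$, which the paper uses to guarantee $s_i(0)>0$); without it only the sum identity survives, which is consistent with the paper's handling of the insufficient-capacity case by saturating at $\pi_i^{max}$.
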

\begin{proof}
See \cite{patel2017distributed}.
% Using Lemma \ref{lem:powerApprProtocol}, we get, 
% \begin{align*}\displaystyle
% \pi_i^{*}  &= \pi_i^{min}(t_m) + \displaystyle \frac{\rho_d(t_m) -\scriptstyle \sum_{j=1}^{N}\pi_j^{min}(t_m)}{\scriptstyle \sum_{j=1}^{n}(\pi_j^{max}(t_m)-\pi_j^{min}(t_m))}\\ &\hspace{1cm}.(\pi_i^{max}(t_m)-\pi_i^{min}(t_m)),
% \end{align*}
% for all $i \in \V$. It is clear from the feasibility of request command, $\sum_{i=1}^N \pi_i^{min}(t_m)\leq \rho_{d}(t_m)\leq \sum_{i=1}^N \pi_i^{max}(t_m)$ that,
% $\pi_i^{min}(t_m) \leq \pi_i^{*} \leq \pi_i^{max}(t_m)$. Moreover, it follows that,
% \begin{align*}\displaystyle
% \sum_{i=1}^N \pi_i^{*}  &= \sum_{i=1}^N \pi_i^{min}(t_m) + \displaystyle \frac{\rho_d(t_m) -\scriptstyle \sum_{j=1}^{N}\pi_j^{min}(t_m)}{\scriptstyle \sum_{j=1}^{n}(\pi_j^{max}(t_m)-\pi_j^{min}(t_m))}.\\
% &\hspace{1.5cm}\sum_{i=1}^{N}(\pi_i^{max}(t_m)-\pi_i^{min}(t_m)) = \rho_d(t_m)
% \end{align*}
% This completes the proof.
\end{proof}
\vspace{-0.8cm}
\subsection{RES Prioritization}
In order to demonstrate the dispatchability of available RES, this article also proposes an RES prioritization scheme. The motivation behind this objective is to dispatch spinning RES-based DERs as SFR before ESS for long-term reserve requirements. This is incorporated in the power apportioning protocol by setting $\pi_i^{min}(t_m) = \pi_i^{max}(t_m) -\epsilon_{RES}$,
% initializing the state $s_i(0)$ for $\text{LIS}_i$ with RES at $t_m$ as,
% $s_i(0) := \pi_i^{max}(t_m) - \pi_i^{min}(t_m) = \epsilon_{RES},$
where $\epsilon_{RES} > 0$ is a small. 
% In other words, the minimum and maximum capacity of RES are set close to the current power generation (for example, maximum power point (MPP) in case of PV) with a positive parameter, $\epsilon_{RES}$. 
This enforces all available RES capacities at the initialization of the protocol to be used, allowing for RES prioritization as validated experimentally.\\
Theorem \ref{thm:resouce_allocation} provides a distributed protocol to allocate resources to meet the demand $\rho_d(t_m)$ by a DER network; however, it is clear that this protocol is not amenable to accommodating aggregator commands at subsequent time instants $t_{m+1}, t_{m+2}, \ldots$ as the result in Theorem \ref{thm:resouce_allocation} is asymptotic  (where the desired dispatch, $\pi_i^*$, of the $i^{th}$ DER  is determined only in the limit of iteration $k\rightarrow \infty$).
% Thus, the agents upon receiving the first aggregator command $\rho_d(t_0)$, in principle keep updating their states (using (\ref{eq:num1}), (\ref{eq:den1})) indefinitely without termination and can not respond to any new demand signal, which makes it untenable for real time implementation. 
In order to mitigate this issue, we now propose the formulation of the distributed finite-time termination protocol.

% finite time terminatation criteria to terminate computations at each node distributedly when the ratio of each node's contribution is \lq$\rho-$close\rq \ to the command for a given time instant. 
\vspace{-0.5cm}
\section{Distributed Finite-Time Termination of Resource Apportioning}\label{sec:finitetime} 
In this section, we develop an algorithm using the Maximum and minimum consensus protocols for terminating the ratio consensus algorithm in finite-time based on a specified tolerance, $\rho$.
% This framework first appeared in \cite{cady2015finite} as an extension of the Max-Min consensus based finite time termination of averaging consensus \cite{yadav2007distributed}. However, \cite{cady2015finite, yadav2007distributed} do not consider any communication delays in the network and \cite{yadav2007distributed} is based on the weight matrix $P$ being doubly stochastic thereby restrictive in the sense of distributed selection of the edge weights. An MXP-MNP framework for finite time termination of ratio consensus ((\ref{eq: Num}) and (\ref{eq: Den})) with uniformly bounded communication delays and column stochastic weight matrix is reported in  \cite{prakash_timevary}. 
We apply the results in \cite{prakash2019distributed} for distributed finite time termination of the ratio consensus algorithm. The resulting algorithm is presented in Algorithm \ref{alg:algo1}.  
% First results based on the update rules (\ref{eq: Num}) and (\ref{eq: Den}) are established followed by the definitions and convergence of Max-Min consensus algorithms. 

Consider the maximum and minimum value of the ratio of consensus protocols (\ref{eq: Num})\textendash (\ref{eq: Den}) over all nodes within a time horizon $\bar{\tau}$ from any time instant $k$, given as:
        \begin{equation} \label{eq:max_ratio}
    \textstyle M(k):=\underset{r=\{0,1,2,...,\bar{\tau}\}}{\underset{j\in V}{\max}} \frac{x_{j}(k-r)}{y_{j}(k-r)},~ y_j(k-r) \neq 0, j \in V 
\end{equation}
\begin{equation}\label{eq:min_ratio}
     \textstyle m(k):=\underset{r=\{0,1,2,...,\bar{\tau}\}}{\underset{j\in V}{\min}} \frac{x_{j}(k-r)}{y_{j}(k-r)},~ y_j(k-r) \neq 0, j \in V 
\end{equation}

Under assumptions \ref{assmp:connected}\textendash\ref{assmp:selfNode} , % Let $M(k):= \max\limits_{j \in \V}\frac{x_j(k)}{y_j(k)}$ and $m(k):= \min\limits_{j \in \V}\frac{x_j(k)}{y_j(k)}$. 
% Reference \cite{prakash_timevary} establishes the strict inequality of ratio of states $\frac{x_i(k')}{y_i(k')}$, for each node $i$ in the network for future instants $k'$, from network global maximum and minimum values at initial instant $k$ sampled $D(1+\bar{\tau})$ instants away.
it can be shown that the global maximum (minimum) in the network is decreasing (increasing). Of particular interest are the maximum and minimum values over  an ``epoch" which is equal to $T(D,\bar{\tau})=D(1+\bar{\tau}) +\bar{\tau}$, which captures the upper bound on number of iterations required for any node in the network to communicate to any other node in the network.
% \begin{lemma}\label{lem:strictIneq}
%     Let there be nodes $i$ and $j$ such that, $\frac{x_i(k')}{y_i(k')} < M(k)$ and $\frac{x_j(k')}{y_j(k')} > m(k)$ respectively for some time instant $k'>k$. Then for all time instants $k''>k'$,
%     \[ \frac{x_i(k'')}{y_i(k'')} < M(k) \text{ and } \frac{x_j(k'')}{y_j(k'')} > m(k).\]
% \end{lemma}

% \begin{lemma}\label{lem:DSampling}
%         Let at initial time $k$, $m(k)= \min\limits_{j \in \V}\frac{x_j(k)}{y_j(k)} < \max\limits_{j \in \V}\frac{x_j(k)}{y_j(k)} = M(k)$. Then or all $k' \geq k +D(1+\bar{\tau})$,
%          \[ \frac{x_i(k')}{y_i(k')} < M(k) \text{ and } \frac{x_j(k')}{y_j(k')} > m(k) \textrm{ for all } i \in \V.\]
% \end{lemma}

% \begin{remark}
%     Lemma \ref{lem:strictIneq} and Lemma \ref{lem:DSampling} imply the strict inequality of ratio of states $\frac{x_i(k')}{y_i(k')}$, for each node $i$ in the network at future instants $k'$, from network global maximum and minimum values at initial instant $k$ sampled $D(1+\bar{\tau})$ instants away.
% \end{remark}
%  establish the following result that the global maximum (minimum) in the network is a strictly decreasing (increasing) sequence of the index $l$.
\begin{theorem}{}\cite{prakash2019distributed}\label{thm:monotonicity}
    Consider the initial ratio vector at the beginning of the $k^{th}$ epoch given by  $\frac{x(kT)}{y(kT)} := [\frac{x_1(kT)}{y_1(kT)},\ldots,\frac{x_N(kT)}{y_N(kT)} ]$ such that $\min \frac{x(kT)}{y(kT)} < \max \frac{x(kT)}{y(kT)}$, where, $k=0,1,2,\ldots$. Then:
    \begin{align}
        M((k+1)T)< M(kT), ~ m((k+1)T) > m(kT).
    \end{align}
\end{theorem}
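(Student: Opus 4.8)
The plan is to show that over a single epoch of length $T = D(1+\bar\tau) + \bar\tau$, every node's ratio strictly enters the interior of the interval $[m(kT), M(kT)]$, and then use a standard max/min-consensus-type argument to conclude that the new global max over the epoch is strictly below the old one (and symmetrically for the min). The key structural fact, recalled just before the statement, is that $M(k)$ is non-increasing and $m(k)$ is non-decreasing in $k$; so the content of the theorem is the \emph{strict} drop once per epoch, not mere monotonicity.

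First I would fix the $k$-th epoch and write each ratio $x_i(\ell)/y_i(\ell)$ for $\ell$ in the epoch as a convex combination of the ratios $\{x_j(\ell')/y_j(\ell')\}$ at earlier times, weighted by the (delayed, column-stochastic) products of the $p_{ij}$'s along walks in $\G$; concretely, using (\ref{eq: Num})--(\ref{eq: Den}), $x_i(\ell+1)/y_i(\ell+1)$ is a $y$-weighted average of $x_i(\ell)/y_i(\ell)$ and the $x_j(\ell-\tau_{ij})/y_j(\ell-\tau_{ij})$ for $j \in N_i^-$, with all weights strictly positive because $p_{ii}>0$ and $y_i(\cdot)>0$. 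Next I would argue that within $D(1+\bar\tau)$ iterations, the value of at least one node attaining the current global minimum propagates along a path to \emph{every} node: since $\G$ is connected and any hop incurs delay at most $\bar\tau$, after at most $D(1+\bar\tau)$ steps every node's update has drawn, with a strictly positive weight, on a ratio that was $\le$ some value strictly below $M(kT)$ — specifically on a node whose ratio was at most $(M(kT)+m(kT))/2 < M(kT)$, using the epoch-start hypothesis $m(kT) < M(kT)$. Hence at iteration $kT + D(1+\bar\tau)$ every node has ratio strictly less than $M(kT)$; the extra $\bar\tau$ in $T$ guarantees that when the \emph{next} epoch's max is taken over the window $\{(k+1)T - r : r = 0,\dots,\bar\tau\}$, each sampled iteration already lies past this strict-decrease point, so $M((k+1)T) < M(kT)$. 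The argument for $m((k+1)T) > m(kT)$ is identical with the roles of max and min interchanged.

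The main obstacle, and the place I would spend the most care, is making the "strictly positive weight on a below-$M$ value reaches every node within the epoch" claim quantitative and rigorous in the presence of heterogeneous delays: one must track that a walk of graph-length at most $D$ corresponds to at most $D(1+\bar\tau)$ time steps (each edge traversal costing one update plus up to $\bar\tau$ delay), that the composite weight over such a walk is bounded below by a positive constant (a product of the $p_{ij}$'s, each of which is $\ge$ some $\underline p > 0$, divided by bounded $y$-values), and that the convex-combination representation is valid uniformly across the epoch despite the $y_i(k)$ in the denominators varying. Since the excerpt attributes this theorem to \cite{prakash2019distributed}, I would either cite that argument directly or reproduce the above path-propagation estimate; the remaining steps (monotonicity of $M,m$ between epochs, and combining the per-epoch strict drop with the sampling window of width $\bar\tau$) are then routine.
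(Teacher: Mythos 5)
The paper itself does not prove this theorem; it is quoted verbatim from \cite{prakash2019distributed}, so the comparison is with that reference's standard argument. Your overall plan is the same one used there: write each ratio as a convex combination (with $y$-weights) of earlier ratios, observe that all ratios therefore stay in $[m(kT),M(kT)]$, show that within $D(1+\bar{\tau})$ iterations the influence of a node strictly below the epoch-start maximum reaches every node with strictly positive weight, and use the trailing $\bar{\tau}$ so that the whole sampling window defining $M((k+1)T)$, $m((k+1)T)$ lies past the strict-decrease point; strictness over the finitely many nodes and window instants then gives the claim. That skeleton, including the accounting of one update plus at most $\bar{\tau}$ delay per hop and at most $D$ hops, is correct and matches the cited proof.

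One step as you wrote it does not hold, though: the claim that every node draws with positive weight on a node \emph{whose ratio was at most} $(M(kT)+m(kT))/2$. The below-midpoint property neither persists at the minimizing node nor propagates --- after one update the minimizer's ratio can already exceed the midpoint if its in-neighbors sit near $M(kT)$ --- so the relay argument cannot be run with the midpoint as the threshold. Two standard repairs, either of which completes your proof: (i) qualitative version --- a node whose ratio is strictly below $M(kT)$ keeps this strict inequality at all later times because its update places strictly positive weight (via $p_{ii}y_i(\cdot)/y_i(\cdot+1)>0$) on its own previous value, and the property is inherited by any node receiving it with positive weight, so after $D(1+\bar{\tau})$ steps all nodes are strictly below $M(kT)$; or (ii) quantitative version --- bound below by some $\eta>0$ the product of weights along a delayed walk of graph length at most $D$ from the minimizing node, giving $x_i(kT+D(1+\bar{\tau}))/y_i(kT+D(1+\bar{\tau}))\le(1-\eta)M(kT)+\eta\,m(kT)<M(kT)$ for every $i$. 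Note also that both repairs, like your convex-combination step, need strictly positive self-weights $p_{ii}>0$ (or positivity along the relevant walks); this is not implied by Assumption 3 (primitive, column stochastic) alone, but it does hold under the paper's distributed synthesis rule $p_{ji}=1/(D_i^{+}+1)$ and in the setting of \cite{prakash2019distributed}, so it should be stated as a hypothesis rather than taken for granted. With that step replaced and the self-weight condition made explicit, your argument is the same as the cited one.
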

The above result states the the maximum (minimum) over the network at the beginning of a epoch is strictly smaller (larger) than  that at the beginning of a subsequent epoch. Thus the global maximum and minimum sampled at every epoch form strictly monotonic sequences. It can be further shown that these sequences converge to the consensus value, as specified in the next theorem. 
\begin{theorem}\cite{prakash2019distributed}\label{thm:convg}
       $\underset{k\rightarrow \infty}{\lim} M(kT) = \underset{k\rightarrow \infty}{\lim} m(kT) = \frac{\sum_{j=1}^N x_j(0)}{N}$.
\end{theorem}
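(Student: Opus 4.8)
The plan is to obtain the limit directly from the asymptotic ratio‑consensus guarantee (Theorem \ref{thm:avg_con}) and a squeeze over the sliding window, using the epoch‑wise monotonicity (Theorem \ref{thm:monotonicity}) only to see that $M(kT)$ and $m(kT)$ approach their common limit monotonically, from above and below respectively. Write $c := \frac{\sum_{j=1}^N x_j(0)}{\sum_{j=1}^N y_j(0)}$; in the averaging instantiation one has $y_j(0)=1$ for all $j$, so $c = \frac{1}{N}\sum_{j=1}^N x_j(0)$, which is the value claimed. First I would record two elementary facts: $m(kT)\le M(kT)$ for every $k$, since the windowed maximum over $j\in\V$ and $r\in\{0,\dots,\bar{\tau}\}$ dominates the windowed minimum; and, by Theorem \ref{thm:monotonicity}, $\{M(kT)\}$ is nonincreasing and $\{m(kT)\}$ is nondecreasing over epochs on which the extremes stay distinct, so that both sequences, being monotone and bounded (by $M(0)$ and $m(0)$), converge, say to $M^\star$ and $m^\star$ with $m^\star\le M^\star$.

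It then remains to show $M^\star = m^\star = c$, and for this I would invoke Theorem \ref{thm:avg_con}: $\frac{x_i(k)}{y_i(k)}\to c$ as $k\to\infty$ for every $i\in\V$. Since $\V$ is finite this convergence is uniform in $i$, so given $\eps>0$ there is $K$ with $\bigl|\frac{x_i(k)}{y_i(k)}-c\bigr|<\eps$ for all $i\in\V$ and all $k\ge K$. Then for every $k$ with $kT\ge K+\bar{\tau}$, each of the finitely many ratios $\frac{x_j(kT-r)}{y_j(kT-r)}$ appearing in the definitions (\ref{eq:max_ratio})--(\ref{eq:min_ratio}) of $M(kT)$ and $m(kT)$ lies in $(c-\eps,c+\eps)$; hence $|M(kT)-c|\le\eps$ and $|m(kT)-c|\le\eps$. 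Letting $\eps\downarrow 0$ yields $M^\star=m^\star=c$, which is the assertion. (If the initial extreme ratios coincide, the claim already follows from Theorem \ref{thm:avg_con} applied to the common value, so no separate case is needed.)

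I would keep track of two well‑definedness points carried over from the setup, though neither is a genuine obstacle: the ratios are meaningful because $y_j(0)>0$ together with the nonnegativity and column‑stochasticity of $P$ forces $y_j(k)>0$ for all $k$, so the qualifier $y_j(k-r)\ne 0$ in (\ref{eq:max_ratio})--(\ref{eq:min_ratio}) is automatic; and for $k$ large enough the window $\{kT,\dots,kT-\bar{\tau}\}$ lies entirely in $k\ge 0$, so the sliding max/min only ever sees iterates governed by (\ref{eq: Num})--(\ref{eq: Den}). Given Theorems \ref{thm:avg_con} and \ref{thm:monotonicity}, the only step requiring a little care is the passage from the \emph{pointwise}-in-$k$ convergence of each nodal ratio to control of the \emph{windowed} extremes over the last $\bar{\tau}+1$ iterates, and this is harmless precisely because both $\V$ and the delay horizon $\bar{\tau}$ are finite; the genuinely substantive input — strict monotonicity of the epoch samples, which is what makes the sequences usable as a distributed termination test — is already supplied by the previously stated Theorem \ref{thm:monotonicity}.
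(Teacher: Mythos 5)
Your argument is correct. Note, however, that the paper itself gives no proof of this statement: it is quoted from \cite{prakash2019distributed}, so the only possible comparison is with that reference rather than with an in-paper argument. Your derivation is a clean, self-contained route using only material already stated in the paper: monotonicity (Theorem \ref{thm:monotonicity}) gives convergence of the epoch-sampled extremes, and the squeeze via Theorem \ref{thm:avg_con} — pointwise convergence of each nodal ratio, upgraded to uniform convergence because $\V$ is finite and the window in (\ref{eq:max_ratio})--(\ref{eq:min_ratio}) has fixed length $\bar{\tau}+1$ — identifies the common limit. In fact your squeeze alone already yields $M(kT)\to c$ and $m(kT)\to c$ without any appeal to monotonicity, so the monotonicity input is, as you observe, only needed to make the epoch samples usable as a termination test, not for the limit itself. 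Two small points worth keeping explicit: (i) the value $\frac{1}{N}\sum_j x_j(0)$ in the statement presumes the averaging normalization $y_j(0)=1$; in the paper's apportioning instantiation ($s_i(0)=\pi_i^{max}-\pi_i^{min}$) the limit is $\sum_j x_j(0)/\sum_j y_j(0)$, which you correctly flagged; (ii) positivity of $y_j(k)$ under delays uses $p_{jj}>0$ (guaranteed by the distributed synthesis in the paper's Remark 2), not merely nonnegativity and column stochasticity, so you should cite the positive self-weight rather than column stochasticity alone. Neither point affects the validity of your proof.
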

The theorem above establishes the asymptotic convergence of global maximum and minimum to the ratio of the sum of initial states. Thus, if we can determine the global maximum and minimum at each epoch distributedly, their difference can be used to bound how far each node's ratio state is from the final consensus value.\\
    % \begin{remark}
    % It is shown in \cite{prakash_timevary} that $\{M(k)\}_{k\in \mathbb{N}}$ and $\{m(k)\}_{k \in \mathbb{N}}$ converges to $\frac{\sum_{j =1}^{n}x_j(0)}{N}$.
    % \end{remark}
    % \begin{remark}
    % Given that $w_j(0) = 1$ for all $j \in V$ and $P$ is a non-negative matrix, $w_j(k) \neq 0$ for all $k \in \mathbb{N}$. 
    % \end{remark}
We now introduce the maximum consensus and minimum consensus protocols that determine $M(kT)$ and $m(kT)$  in finite number of iterations.
% to ascertain the global maximum and minimum values distributedly.
\vspace{-0.5cm}
\subsection {Maximum and Minimum Consensus Protocols}
 The maximum consensus protocol (MXP) works as follows. Consider at time instant $\ell \bar{\tau}$, the value held by the $i^{th}$ node is $z_i(\ell \bar{\tau})$ with $z_i(0)$ being the initial value at $\ell = 0$. The $i^{th}$ node holds this value for $\bar{\tau}$ units of time; as $\bar{\tau}$ is the upper bound on delay, in $\bar{\tau}$ units of time node $i$ receives $z_j(\ell \bar{\tau})$ for all $j \in \mathcal{N}_i^-$ and thus it can execute the update:
 \begin{subequations}\label{eq: MaxProtocolNodelay}
 \begin{align}
     z_i(\ell \bar{\tau} +\bar{\tau}) &= \underset{j \in N_i^{-}\cup\{i\}}{\max} z_j(\ell \bar{\tau}), \\
     z_i(\ell \bar{\tau} +k) &= z_i (\ell \bar{\tau}), ~ \text{for } k =1,2,\ldots, \bar{\tau} -1.
 \end{align}
 \end{subequations}

%  Maximum consensus protocol (MXP) computes the maximum of a given initial node conditions $z(0):=[z_{1}(0) \ z_{2}(0)....z_{N}(0)]^{T}$ in a distributed manner.  It takes $z(0)$ as an input and generates a sequence of node values based on the following update rule for node $i$. For iterations $\ell \geq 0$:
% % \begin{subequations}
% %  \begin{equation} \label{eq: MaxProtocolNodelaya}
% %      z_i(\ell\bar{\tau} + k) = z_i(\ell\bar{\tau} + k-1), k \in \{\ell+1, \ell+2,\cdots, \ell+\bar{\tau}\},
% %  \end{equation}
%  \begin{equation}\label{eq: MaxProtocolNodelayb}
%      z_i((\ell+1)\widetilde{T})=\underset{j \in N_i^{-}\cup\{i\}}{\max} z_j((\ell+1)\widetilde{T}-(\tau_{ij}+1))
%  \end{equation}
% \end{subequations} 
% \begin{align}
% &z_i(\ell\bar{\tau} + k) = z_i(\ell\bar{\tau} + k-1), k \in \{\ell+1, \ell+2,\cdots, \ell+\bar{\tau}\},\nonumber \\ 
% &z_i((\ell+1)\tilde{T})=\underset{j \in N_i^{-}\cup\{i\}}{\max} z_j((\ell+1)\tilde{T}-(\tau_{ij}+1)).\label{eq: MaxProtocolNodelay}
% \end{align}
% where, $\widetilde{T}:= \bar{\tau}+1$ and $\tau_{ij} \leq \bar{\tau}$ for any $i,j \in \V$.
% Note that (\ref{eq: MaxProtocolNodelaya}) maintains the value of $z_i$ at $z_i((\ell - 1)\bar{\tau})$ till  $\ell\bar{\tau} + k$, $\ell \in {1, 2, ..., \bar{\tau}}$ ends. 
% Here, (\ref{eq: MaxProtocolNodelayb}) updates the state $z_i$ for node $i$ at time instances, which are multiples of ($\bar{\tau} + 1$) to account for delays upto $\bar{\tau}$, based  on recent information from its neighbors and itself. Effectively every $z_i$ update takes place once after every $\bar{\tau}$ iterations.\\
Similarly, the minimum consensus protocol (MNP) computes the minimum of the given initial node conditions $w(0):=[w_{1}(0)\ w_{2}(0)....w_{n}(0)]^{T}$ in a distributed manner. Consider at time instant $\ell \bar{\tau}$, the value held by the $i^{th}$ node is $w_i(\ell \bar{\tau})$ with $w_i(0)$ being the initial value at $\ell = 0$. Similar to the MXP protocol, $i^{th}$ node holds this value for $\bar{\tau}$ units of time and in $\bar{\tau}$ units of time node $i$ would have received $w_j(\ell \bar{\tau})$ for all $j \in \mathcal{N}_i^-$ and thus it can execute the update:
\begin{subequations}\label{eq: MinProtocolNodelay}
 \begin{align}
     w_i(\ell \bar{\tau} +\bar{\tau}) &= \underset{j \in N_i^{-}\cup\{i\}}{\min} w_j(\ell \bar{\tau}), \\
     w_i(\ell \bar{\tau} +k) &= w_i (\ell \bar{\tau}), ~ \text{for } k =1,2,\ldots, \bar{\tau} -1.
 \end{align}
 \end{subequations}
% \begin{subequations} 
%  \begin{equation}\label{eq: MinProtocolNodelaya}
%      w_i(\ell\bar{\tau} + k) = w_i(\ell\bar{\tau} + k-1), k \in \{\ell+1, \ell+2,\cdots, \ell+\bar{\tau}\}
%  \end{equation}
%  \begin{equation} \label{eq: MinProtocolNodelayb}
%      w_i((\ell+1)\widetilde{T})=\underset{j \in N_i^{-}\cup\{i\}}{\min} w_j((\ell+1)\widetilde{T}-(\tau_{ij}+1)).
%  \end{equation}
% \end{subequations}
% \begin{align}    
% &w_i(\ell\bar{\tau} + k) = w_i(\ell\bar{\tau} + k-1), k \in \{\ell+1, \ell+2,\cdots, \ell+\bar{\tau}\},\nonumber\\ 
% &w_i((\ell+1)\tilde{T})=\underset{j \in N_i^{-}\cup\{i\}}{\min} w_j((\ell+1)\tilde{T}-(\tau_{ij}+1)). \label{eq: MinProtocolNodelay}
% \end{align}
\begin{remark}
MXP and MNP converge to the maximum and minimum of the initial conditions, respectively, within  an epoch $T(D,\bar{\tau}):= D(1+\bar{\tau})+\bar{\tau}$ iterations, where $D$ is the upper bound on the diameter of the network \emph{\cite{prakash2019distributed}}. Thus, each node can
 compute the global maximum and minimum distributedly in $D(1+\bar{\tau})+\bar{\tau}$ .
\end{remark}

% \end{subsection}
% \vspace{-0.5cm}
% \subsection{Distributed Finite-Time Termination}
% As alluded to in Section \ref{sec:apportioning}, in order to respond to updated aggregator signals a finite-time termination is imperative for applicability to real-time applications.
% We propose an algorithm using the MXP-MNP for terminating the ratio consensus protocol in finite-time based on a user-specified threshold, $\rho$.
% This framework first appeared in \cite{cady2015finite} as an extension of the Max-Min consensus based finite time termination of averaging consensus \cite{yadav2007distributed}. However, \cite{cady2015finite, yadav2007distributed} do not consider any communication delays in the network and \cite{yadav2007distributed} is based on the weight matrix $P$ being doubly stochastic thereby restrictive in the sense of distributed selection of the edge weights. An MXP-MNP framework for finite time termination of ratio consensus ((\ref{eq: Num}) and (\ref{eq: Den})) with uniformly bounded communication delays and column stochastic weight matrix is reported in  \cite{prakash_timevary}. 
% We apply the results in \cite{prakash2019distributed} for distributed finite time termination of the numerator and denominator ratio consensus protocol as shown in Algorithm \ref{alg:algo1}.
% The initial conditions for the MXP and MNP protocols are set as the initial ratio held by the nodes, that is, $z_i(0) = r_i(0)/s_i(0)$ and $y_i(0) = r_i(0)/s_i(0)$.
The MXP and MNP protocols at each node $i \in V$ are re-initialized at every epoch; here, for every $k = \theta (D(1+\bar{\tau})+\bar{\tau}) = \theta T$, where $\theta = 1,2,...$, we initialize $z_i(k) = \frac{r_i(k)}{s_i(k)}$ and $w_i(k) = \frac{r_i(k)}{s_i(k)}$. In an epoch, each node $i$  can determine $M(\theta T) =\max_j z_j(\theta T)$ and $m(\theta T) =\min_j w_j(\theta T).$ Let $M(\theta T)$ and $m(\theta T)$ be the converged values from the MXP and MNP protocols after each epoch $T$. Then, it follows from Theorem \ref{thm:monotonicity} and Theorem \ref{thm:convg} that $M( \theta T) \searrow \frac{\sum_{j=1}^N x_j(0)}{N} $ and $m(\theta T)\nearrow \frac{\sum_{j=1}^N x_j(0)}{N}$.  Thus given any tolerance $\rho$,  there  exists a $\theta_0$ such that if $\theta \geq \theta_0$, $M(\theta T) -m( \theta T) \leq \rho$ where $M(\theta T) -m(T\theta)$ can be determined by every node in finite time. Thus,  the consensus value $\frac{\sum_{j=1}^N x_j(0)}{N}$ can be determined with a tolerance $\rho$ in finite number of iterations. \cite{prakash2019distributed} establishes rigorously that given a threshold $\rho > 0$, Algorithm \ref{alg:algo1} terminates in a finite number of iterations.
\begin{remark}
% The MXP and MNP iterations are used to compute $\beta_i(\theta)$ in a distributed manner at each node as defined in  Algorithm \ref{alg:algo1} where $\beta_{i}(\theta) < \rho$ is used as the stopping criteria for termination of (\ref{eq:num1}) and (\ref{eq:den1}). 
% It computes the approximate limiting ratio  $r_i^*/s_i^*$ (each node converges \lq close\rq \ to $\displaystyle \lim_{k\rightarrow \infty} r_i(k)/s_i(k)$) in a distributed manner. 
The only global parameters needed by each node in order to implement the stopping criteria of (\ref{eq:num1}) and (\ref{eq:den1}) are upper bounds on both the maximum delay, $\bar{\tau}$, and diameter of the network, $D$, which can be enforced by design.
 \end{remark}
\vspace{-0.35cm}
\subsection{Early Dispatch Mechanism}\label{subsec:EDmechanism}
In order to meet the requirement of initial response time of $<5$ s for DERs participating in SFR support, we propose an early dispatch mechanism. In this mechanism, we use the ratio value from the most recent epoch to obtain reference power commands for each DER. The ratio value is guaranteed to stay within the global maximum and minimum ratios (computed via the MXP and MNP protocols) for the DER units running the power apportioning protocol due to the monotonicity property of the ratio consensus protocol (Theorem \ref{thm:convg}), the subsequences generated by the MXP-MNP protocols  converging towards the final set point $\pi_i^*(t_m)$ within the tolerance, $\rho$.
% We leverage this property of the algorithm to begin dispatch much before convergence.
Under the early dispatch mechanism the $i^{th}$ DER provides dispatch as:
\begin{align*}
    \displaystyle
    \pi_i^*(t_m,\theta) = \pi_i^{min}(t_m) + \dfrac{r_i(k)}{s_i(k)}(\pi_i^{max}(t_m)-\pi_i^{min}(t_m))
\end{align*}
where, $k = \theta(D(1+\bar{\tau})+\bar{\tau})$ (see Algorithm~\ref{alg:algo1}). Here $\pi_i^*(t_m,\theta) \rightarrow \pi_i^*(t_m)$ for all $i \in \V$. For practical implementation, early dispatch is implemented for $\theta > 3$.
\vspace{-0.3cm}
\subsection{Brown Start: Changes in Power Demand as Input}\label{subsec:BrownStartmechanism}
    In this section, we propose a brown-start mechanism where command circulating nodes require as input, the change in the requested power command from the previous time instant ($\Delta \rho_d(t_m)$) as the aggregator signal where for $m >0$:
    \begin{align}\label{eq:brownRhoD}
        \Delta \rho_d(t_m)= \rho_d(t_m) - \rho_d(t_{m-1}).
    \end{align}
    The advantage of this mechanism is that the algorithm can be reinitialized with the converged state at the previous time instant rather than  by following the initialization given by (\ref{eq:initialize1}) with respect to an aggregator command $\rho_d(t)$ that does not have regard for the operating state of the DER network. The latter approach results in larger deviations in output power between aggregator commands (see Fig. \ref{fig:ratingsDER}(a)). When $\Delta\rho_d$ is small, the algorithm can converge very quickly as all the nodes will have initial values near the consensus state unlike the case where the algorithm is re-initialized with the new command, $\rho_d(t_m)$.
We propose a modified numerator update, $r_i$, which is initialized as, $r_i(0):=$
\begin{align}\label{eq:brownInitialize}
% \textstyle r_i(0)=&
\begin{cases}
   \textstyle \Delta \rho_d(t_m)/l - \pi_i^{min}(t_m) + \pi^{*}_i(t_{m-1}), \ \text{if} \ i \in \mathcal{N}_d, \\
 - \textstyle \pi_i^{min}(t_m)+ \pi^{*}_i(t_{m-1}), \ \text{if} \ i \not\in \mathcal{N}_d 
\end{cases}
\end{align}

% rhoD = rhoDCurrent - rhoDPrev // (only for aggregator nodes, still 0 for everyone else)

% Then for each node,
% numerator = rhoD/l - Pi_min + Pi_old // (rhoD coming from above, 0 for non-aggregator nodes, l aggregator nodes)
% denominator = Pi_max-Pi_min (no change)
With the new initialization of $r_i$ by (\ref{eq:brownInitialize}), instead of (\ref{eq:initialize1}), at a given instant $t_m$ for $m > 0$, we have:
\begin{align}
    \textstyle \sum_{i=1}^N r_i(0) &= \textstyle \sum_{i \in \mathcal{N}_d} \Big(\dfrac{\Delta \rho_d(t_m)}{l}-\pi_i^{min}(t_m))+ \pi^{*}_i(t_{m-1})\Big)  \nonumber\\
    & \quad +\textstyle \sum_{i \notin \mathcal{N}_d} (- \pi_i^{min}(t_m))+ \pi^{*}_i(t_{m-1})) \nonumber
%     &=\sum_{i \in \mathcal{N}_d} \Big(\dfrac{ \rho_d(t_m)) - \rho_d(t_{m-1}t_{m-1})}{l}-\pi_i^{min}(t_m))\Big) + \nonumber \nonumber\\
%     &\sum_{i \notin \mathcal{N}_d} (- \pi_i^{min}(t_m)))+ \sum_{i \in V}\pi^{*}_i(t_{m-1})\\
%   &=\sum_{i \in \mathcal{N}_d} \Big(\dfrac{ \rho_d(t_m)}{l}-\pi_i^{min}(t_m))\Big) - \rho_d(t_{m-1}) + \nonumber \\
%     &\sum_{i \notin \mathcal{N}_d} (- \pi_i^{min}(t_m))+ \sum_{i \in V}\pi^{*}_i(t_{m-1})\label{eq:brownNum1}
\end{align}
Because the algorithm has converged in previous iteration, we have: $\rho_d(t_{m-1})=\sum_{i \in V}\pi^{*}_i(t_{m-1})$. Thus, $\sum_{i=1}^N r_i(0) = $
\begin{align*} \textstyle
     \textstyle \sum_{i \in \mathcal{N}_d} \Big(\dfrac{\textstyle \rho_d(t_m)}{l}-\textstyle \pi_i^{min}(t_m)\Big) + \textstyle\sum_{i \notin \mathcal{N}_d} (- \pi_i^{min}(t_m)),
\end{align*}
which is the same as when $r_i$ was initialized by (\ref{eq:initialize1}). Thus, Lemma~\ref{lem:powerApprProtocol} still holds. 
\begin{algorithm}[t]
\linespread{0.3}\selectfont
\SetAlgoLined
\small
    \SetKwBlock{Input}{Input:}{}
    \SetKwBlock{Initialize}{Initialize:}{}
    \SetKwBlock{Repeat}{Repeat: At each time instant $t_m$, $(m \in\{0,1,2,\ldots\})$}{}
    \SetKwBlock{Repeatconv}{Repeat: }{}
    %\SetKwInOut{Output}{Output}
    \Repeat{ 
        \Input{$\pi_i^{\min}(t_m)$, $\pi_i^{\max}(t_m)$, $\rho, \bar{\tau}, D$} 
        \Initialize{
            \uIf{ $m=0$ \tcp{Black start}}{
                Aggregator input: $\dfrac{\rho_d(t_m)}{l}$ if $i \in \mathcal{N}_d$
                \tcp{Initialize $r_i(0)$ as in (\ref{eq:initialize1})}
            }\Else
             {
                \tcp{Brown start}
                Aggregator input: $\dfrac{\Delta \rho_d(t_m)}{l}$ if $i \in \mathcal{N}_d$
                \tcp{Initialize $r_i(0)$ as in (\ref{eq:brownInitialize})}
            }
            
            % $r_i(0)=\begin{cases}
            %     \dfrac{\rho_d(t_m)}{l} -\pi_i^{\min}(t_m), \ \ i \in \mathcal{N}_d, \\  -\pi_i^{min}(t_m),~ otherwise\end{cases}$ \\
            $s_i(0)=\pi_i^{\max}(t_m)-\pi_i^{\min}(t_m) $;\\
            $z_i := r_i(0)/s_i(0)$, $w_i := r_i(0)/s_i(0)$;\\
            $k := 0$, $\gamma := 1$, $\theta := 1$;
            %$\psi := 0$;
           }
        \Repeatconv{
        \tcc{ratio consensus updates of node $i$ given by (\ref{eq: Num}), (\ref{eq: Den})}
           
            $r_{i}(k+1) := p_{ii}r_{i}(k)+\sum_{j\epsilon\mathit{N_{i}^-}}p_{ij}r_{j}(k-\tau_{ij})$;\\
            $s_{i}(k+1) := p_{ii}s_{i}(k)+\sum_{j\epsilon\mathit{N_{i}^{-}}}p_{ij}s_{j}(k-\tau_{ij})$;\\
            %\If {$k=\psi+l$} 
           \If {$k+1=\gamma(\bar{\tau}+1)$} { 
                \tcc{maximum and minimum consensus updates given by (\ref{eq: MaxProtocolNodelay}(a)), (\ref{eq: MinProtocolNodelay}(a)) for node $i$}
                $\displaystyle z_i := \max_{j \in N_i^{-} \cup \{i\}}z_j$, 
                $w_i := \displaystyle \min_{j \in N_i^{-} \cup \{i\}}w_j$;\\
                $\gamma := \gamma+1$
            }
            \textbf{emit:} $r_i(k+1)$, $s_i(k+1)$, $w_i$ and $z_i$\\
            \If {$ k+1=  \theta (D(1+\bar{\tau})+\bar{\tau})$} {
                \uIf {$z_i - w_i < \rho$ } {$r_i^* = r_i(k+1);$\\
                $s_i^* = s_i(k+1);$\\
                \textbf{break} \tcp*{stop $r_i$, $s_i$, $w_i$ and $z_i$ updates}
                }
                \Else {
                $z_{i} := r_{i}(\theta (D(1+\bar{\tau})+\bar{\tau}))/s_{i}(\theta (D(1+\bar{\tau})+\bar{\tau}))$;\\
                $w_{i} :=r_{i}(\theta (D(1+\bar{\tau})+\bar{\tau}))/s_{i}(\theta (D(1+\bar{\tau})+\bar{\tau}))$;\\
                $\theta := \theta+1$;\\
                  $\pi_i^{*}(t_m,\theta) := \pi_i^{min}(t_m) +$\\$\qquad \qquad z_i(\pi_i^{max}(t_m) - \pi_i^{min}(t_m))$;\\ \tcp{Early Dispatch Mechanism}
                %$\psi := \psi+k$\\ 
                }
            }
            $k= k + 1 $;
        }
        $\pi_i^{*}(t_m) := \pi_i^{min}(t_m) + \frac{r_i^*}{s_i^*}(\pi_i^{max}(t_m) - \pi_i^{min}(t_m))$ 
        \tcp{final power reference command for node $i \in V$}
        }
        \caption{Distributed finite-time termination of resource apportioning in the presence of communication delays (at each node $i \in V$)}
        \label{alg:algo1}
\end{algorithm}
% \vspace{-0.5cm}
\subsection{Insufficient DER Capacity for SFR Dispatches}
% Here we consider the case when the aggregate maximum generation capacity in the network is less than the desired aggregator command. 
Typically in electricity markets, enough generation reserves and controllable load resources are committed such that North American Electric Reliability Corporation performance standards are always met;
% Violations, although not common, are heavily penalized to deter under and over generation trends. 
however, since in the proposed distributed framework the aggregator/DSO does not have access to real-time capacities of participating DERs, it is imperative that during insufficient capacity periods, i.e., when $\sum_{i=1}^N \pi_i^{max}(t_m) < \rho_d(t_m)$, the maximum generation capacity be commanded in order to minimize the error between the commanded and dispatched output power. Algorithm \ref{alg:algo1} in such a scenario will command DER units to dispatch output power, $\pi^*(t_m) = \pi_i^{max}(t_m)$.
% We mitigate such a situation by limiting the convergence value of the ratio $r_i^*/s_i^*$ to be between 0 and 1. 
This is further validated in Test Case II.
\vspace{-0.3cm}
\section{Experimental Configuration}\label{sec:PHILSetup}
A novel PHIL experimental configuration has been developed to validate the proposed distributed protocol containing 40+ physical hardware generation devices, 48 physical distributed controller nodes (DCNs), 202 simulated DCNs, and a real-time power system model. The components of the experimental configuration (Fig. \ref{fig:PHILsystem}) are described next.
\vspace{-0.3cm}
\subsection{Real-Time Distribution System Model (RT-DSM) Layer}
The underlying RT-DSM used to represent the power network is based on a distribution network model from Aurora, Colorado, USA, which represents around 2000 customers and is augmented in this study to have 50.2\% distributed PV RES penetration (both residential and commercial DER units \textemdash total PV 8.06 MVA, battery ESS (BESS) capacity of 1.203 MW). The network has a peak load of 7.1 MVA and has 5.285 MVA of controllable DERs. 
% Distributed PV systems were added on the secondary connections of residential transformers with scaled capacities in accordance with existing capacities.
Irradiance and load profiles from the real-world system were used. The RT-DSM was executed on a 12-core OPAL-RT OP5707 using ePhasorSim at a time step of 10 ms. The capacities of participating DERs are shown in Table \ref{tb:ratingsDER}. PHIL interfaces, including power amplifiers and feedback sensors for measurement, are provided to interconnect the physical hardware under test of Fig. \ref{fig:PHILsystem} at five independent PCCs, $P_1$-$P_5$, in the RT-DSM. 
% The RT-DSM provides a real-world power network for interconnection of LIS-DERs.
\vspace{-0.3cm}
\subsection{Simulated Controller (SC) Layer}
The simulated controller layer further consists of: 1) simulated local DER controllers (simLCs) that provide control signals to distributed PV inverters in the RT-DSM. simLC provides real-time measurements of minimum and maximum DER capacity to its DCN interface; 2) simulated distributed controller nodes (simDCNs) formulate the communication interfaces to simLC and provide power reference commands ($P_k^*, Q_k^*,$ see Fig. \ref{fig:PHILsystem}) based on the power apportioning protocol.
\vspace{-0.8cm}
\subsection{Communication Layer}
% In order to provide the desired aggregated response required by the system, communication between various participating nodes is crucial.
Rpi-based DCNs, with inherent bounded time delays, are deployed to communicate with phLIS devices and simLCs. A communication topology is generated with the diameter and average node degree corresponding to the distribution network with 250 nodes overlayed on DCNs (simulated and Rpi-based) to represent the weakly-connected neighborhoods of RT-DSM. The WebSocket-based  communication  protocol  between  DCNs was implemented over Ethernet.
% to demonstrate applicability to multiple communication channels. % (see \cite{randompowergraphs})

% \begin{table}[b]
% \centering
% \caption{Communication Interfaces}
% \label{tb:commInterfaces}
% \begin{tabular}{cp{3cm}}
% \toprule 
% \textbf{Interface} & \textbf{Protocol }           \\ \midrule \midrule
% DCN-DCN                         & Socket-io (wired/wireless)                \\ \midrule
% \multicolumn{1}{c}{\multirow{2}{*}{DCN- phLC}} & SCI/TCP-IP (Res. LIS)            \\ \cmidrule(l){2-2} 
% \multicolumn{1}{c}{}                           & MODBUS (Comm. LIS) \\ \midrule
% DCN-simLC                      & UDP                       \\ \bottomrule
% \end{tabular}
% \end{table}
\vspace{-0.4cm}
\subsection{Physical Hardware Local Inverter System (phLIS) Layer}
\noindent 1) Residential-scale DER: This 2-kVA custom-built DER (Fig. \ref{fig:HUT} (a)) is a Type-4 Inverter connected to a PV simulator, residential loads, and is interfaced at P1 - 240-V PCC. \\
% The controller is implemented on a Texas Instruments Delfino F28379D DSP connected to an Rpi-DCN through (Serial Communication Interface) SCI protocol for receiving power reference commands.
% Further details are available in \cite{lundstrom2019high}. 
% The output of this LIS is connected to residential loads and is interfaced at P1 - 240 V PCC in RT-DSM.  \\
2) Utility-scale DER (\textit{Type-6 Inverter}): This custom-built DER consists of a BESS (100 kW three-phase, Lithium-ion based 32.8 kWh) and a commercial PV inverter (100 kW, three-phase) (Fig. \ref{fig:HUT} (b)) coordinated via an integrated controller and interfaced at the three-phase PCC P2.\\
3) DER racks: Each DER rack (Fig. \ref{fig:HUT} (c),(d)) consists of two commercial-off-the-shelf residential PV string inverters (Type  1, 3/5 kVA and Type 2 3.8 kVA), a BESS inverter (Type 5, 5 kW) and three racks of 12 PV microinverters (Type 3, 320 W each). Each DER rack is capable of receiving a power reference command via MODBUS protocol, and the racks are interfaced at 240 V PCCs P3 - P5 as shown in Table \ref{tb:ratingsDER}.
Note that the phLIS DERs are scaled through PHIL interfaces to emulate large-scale hardware devices in the RT-DSM. These scalings are presented in Table \ref{tb:ratingsDER}.

\begin{figure}[t]
\vspace{-0.7cm}
    \centering
 \subfloat[]{
            \includegraphics[scale=0.13,trim={2.8cm 0cm 13.5cm 1.9cm},clip]{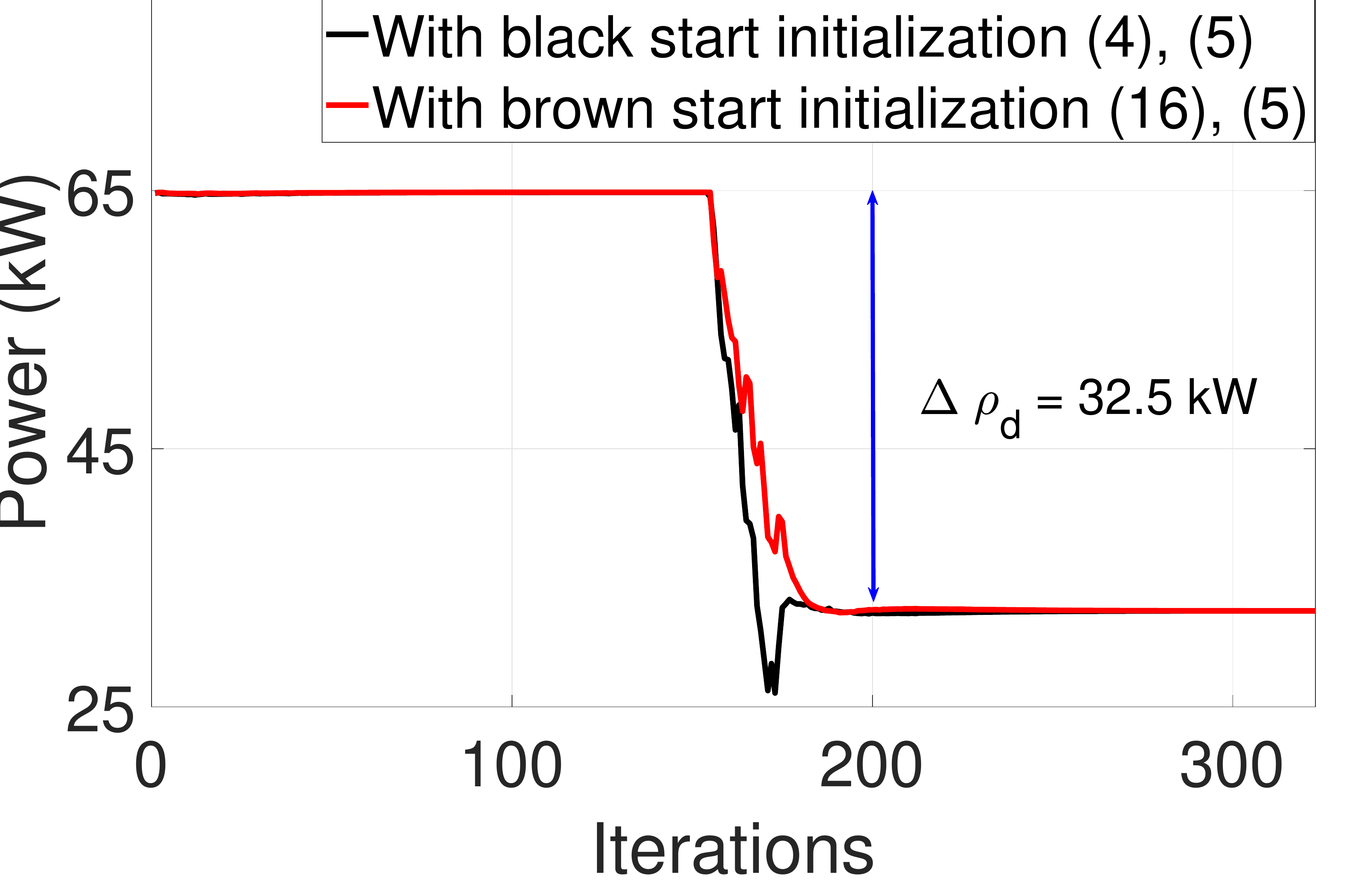}
            }
 \subfloat[]{
            \includegraphics[scale=0.22,trim={0cm 0cm 0cm 0cm},clip]{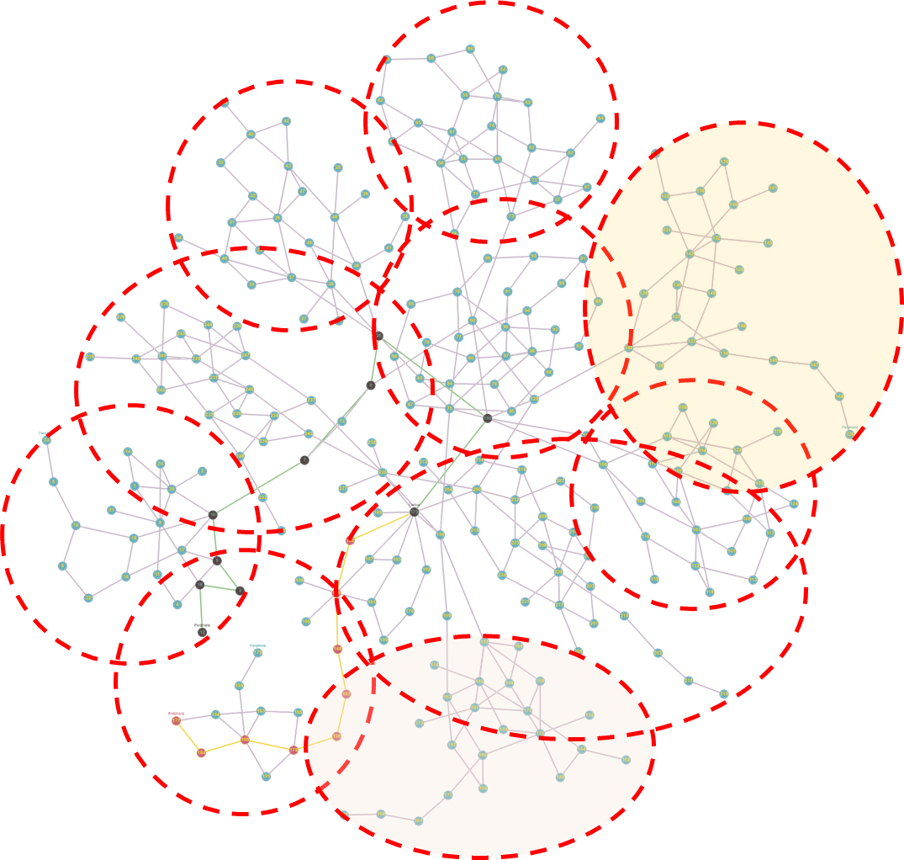}
            }
    
    \caption{(a) Aggregated DER response for decreasing aggregator command with and without brown-start initialization, (b) communication graph ($D=19$) with neighborhoods.}
    \label{fig:ratingsDER}
\end{figure}

\begin{figure}[t]
    \centering
    \includegraphics[scale=0.28]{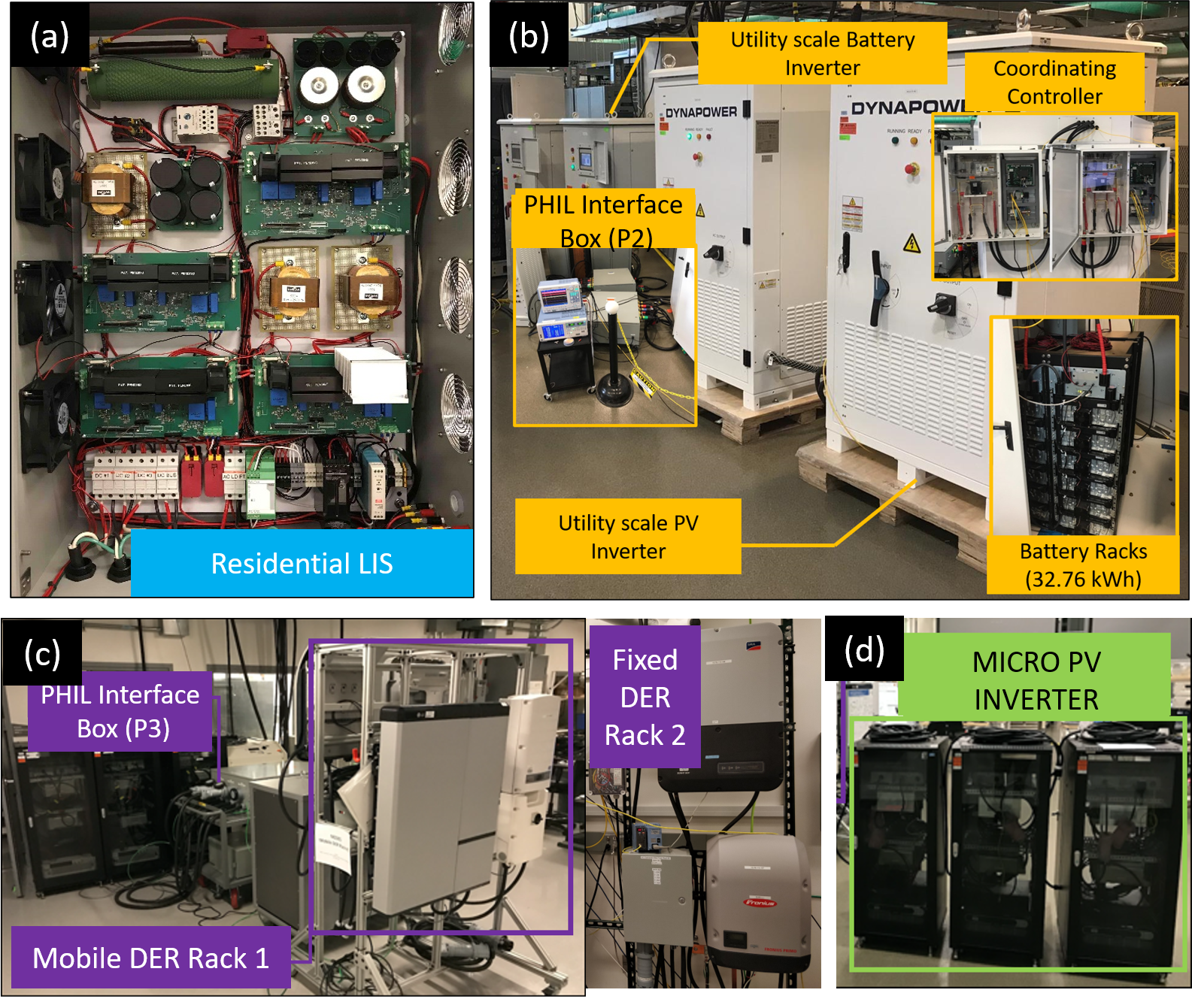}
    \caption{Hardware under test for PHIL experiments.
    % (a) Residential scale Local Inverter System, (b) Utility scale PV+Battery LIS with coordinating controller and PHIL Interface box P2, (c) Mobile DER LIS Rack 1, Fixed DER rack 2 (d) Micro PV array Inverters.
    }
    \label{fig:HUT}
\end{figure}

% \begin{table}[b]
% \caption{DER Ratings for Experiments}
% \label{tb:ratingsDER}
% \begin{tabular}{ccccccc}
% \toprule

%%%%DER CAPACITY TABLE

% \begin{figure}
%     \centering
%     \subfloat[]{\includegraphics[scale=0.27]{connectivitGraph.png}
%     }
%     \subfloat[]{
%     \includegraphics[scale=0.17,trim={13cm 0.2cm 13cm 0.2cm},clip]{connectivity.pdf}
%     }
%     \caption{(a) Communication graph ($D=19$), (b) adjacency matrix for the weakly connected neighborhoods of LIS units.}
%     \label{fig:connMatrix}
% \end{figure}
\vspace{-0.3cm}
\section{Results}\label{sec:results}
The developed PHIL system is used for demonstrating the capability of the finite-time power apportioning protocol toward providing SFR as an ancillary service to the grid. Each phLIS is facilitated with a RPi-based module for communication with other network DER units. For validation, $\bar{\tau}$ of 50 ms for Test Case I and 20 ms for Test Case II are chosen which are typical round trip times observed. 
% \cite{Vasudevan_roundtriptimes}. 
The communication topology of Fig. \ref{fig:ratingsDER} (b) is chosen.\\
\vspace{0.3cm}

\begin{table}[b]
\caption{DER RATINGS FOR EXPERIMENTS}
\label{tb:ratingsDER}
\begin{tabular}{ccccccc}
% \hline\\
\multicolumn{4}{c}{\textbf{Inverters}}                                                                                                                                                                                                & \multicolumn{2}{c}{\multirow{2}{*}{\begin{tabular}[c]{@{}c@{}}\textbf{Device}\\ \textbf{Count}\end{tabular}}} & \multirow{3}{*}{\begin{tabular}[c]{@{}c@{}}\textbf{PHIL}\\ \textbf{Interface}\\ \textbf{point}\end{tabular}} \\ \cline{1-4}
\textbf{Type }                   & \multicolumn{3}{c}{\begin{tabular}[c]{@{}c@{}}\textbf{Capacity}\\ \textbf{(kW)}\end{tabular}}                                                                                                                        & \multicolumn{2}{c}{}                                                                        &                                                                                   \\ \cline{1-6}
                        & \multicolumn{2}{c}{\begin{tabular}[c]{@{}c@{}}physical\\ \textbf{actual}      \textbf{scaled}\end{tabular}}                                    & sim.                                                                & phys.                                                      & sim.                           &                                                                                   \\ \hline
\multirow{2}{*}{Type 1} & \multirow{2}{*}{3/5/5}                              & \multirow{2}{*}{\begin{tabular}[c]{@{}c@{}}100/75\\ /375\end{tabular}} & \multirow{2}{*}{\begin{tabular}[c]{@{}c@{}}3-6\\ /100\end{tabular}} & \multirow{2}{*}{2/1/1}                                     & \multirow{2}{*}{155/7}         & \multirow{2}{*}{P3-P5}                                                            \\
                        &                                                     &                                                                        &                                                                     &                                                            &                                &                                                                                   \\ \hline
Type 2                  & 3.8                                                 & 375                                                                    & 100                                                                 & 1                                                          & 10                             & P3-P5                                                                             \\ \hline
Type 3                  & \begin{tabular}[c]{@{}c@{}}0.32\\ each\end{tabular} & \begin{tabular}[c]{@{}c@{}}6.25/8.33\\ /0.25\end{tabular}              & -                                                                   & 12/12/12         & -                              & P3-P5                                                                             \\ \hline
Type 4                  & 2                                                   & 2                                                                      & -                                                                   & 1                                                          & -                              & P1                                                                                \\ \hline
Type 5                  & 5/5                                                 & 5/100                                                                  & 7-10                                                                & 1/2                                                        & 65                             & P3-P5                                                                             \\ \hline
Type 6                  & 100                                                 & 1000                                                                   & -                                                                   & 1                                                          & -                              & P2                                                                                \\ \hline
\textbf{Total }                  &                                                     & \begin{tabular}[c]{@{}c@{}}\textbf{2,410} \\ \textbf{(MW)}\end{tabular}                  & \begin{tabular}[c]{@{}c@{}}\textbf{2,864}\\ \textbf{(MW)}\end{tabular}                & \textbf{46 }                                                        & \textbf{237}                            &                                                                                   \\ \hline
\end{tabular}
% \vspace{-0.3cm}
\end{table}

\vspace{-0.5cm}
% Availability of BESS throughout the network is modeled based on a profile to reflect discharging and charging of BESS (See Fig. \ref{fig:ratingsDER} (a).% The aggregator is also modeled to have a Rpi-based communication interface that distributes the system operator's command to two command circulating nodes in the LIS network. 
\textit{Test Case I: Black Start: DER Aggregation and Early Dispatch with Fixed Capacities}\\

We consider the scenario where the SFR commanded by the aggregator is $\rho_d$ (instead of $\Delta \rho_d$) and participating DERs have fixed capacities. The RT-DSM follows a net-load profile where the feeder consumption supplied by the bulk power system is 200 kW (Fig. \ref{fig:subPower_noED} (a)) and DER units collectively provide 225 kW. A fixed irradiance of $800~ W/m^2$ is considered for Test Case I. Distributed PV inverters in the network have headroom available to provide SFR.
At $t=27$ s, the aggregator sends a new total power command, $\rho_d = 500\, kW$,  to achieve a change in feeder net active power consumption by $\Delta \rho_d = -275 \, kW$ (see Fig. \ref{fig:subPower_noED} (b). Two command circulating nodes receive the command signals, $\rho_d/2$, each and initiate the power apportioning algorithm with early dispatch. $\rho$ is set to 0.01 for all test cases. Fig. \ref{fig:DER1power_noED} shows the reference power command updates of the three Rpi-DCNs associated with Mobile DER Rack 1 and corresponding DER output power. Clearly, the initial response of the DERs is obtained within 2 s of receiving the aggregator command by the command circulating nodes, and the finite-time criteria is met within 50 s for the distribution network under validation. Fig. \ref{fig:DER_noED} shows the power reference updates and the total active power reference for all the 250 participating DERs. These results establish that the desired SFR command is met and the response times achieved by the simulated and phLIS DERs satisfy the initial response time requirement of 5 seconds and ramp time to response of 1 minute.
Figure \ref{fig:subPower_noED}(a), compares the case with and without DER' dispatch to meet active power in the feeder. The response of the feeder with respect to the increase in aggregated DER' output is presented in Fig. \ref{fig:subPower_noED}(b). The red line here corresponds to the change in dispatch command received by $\mathcal{N}_d$. The secondary feeder voltages were  remain well within $\pm 1\%$ of nominal (ANSI C84.1).\\

%\textemdash well within ANSI C84.1 standards.\\

\begin{figure}[t]
    \centering
    \subfloat[]{\includegraphics[scale=0.21,trim={0.35cm 0cm 0.75cm 0.5cm},clip]{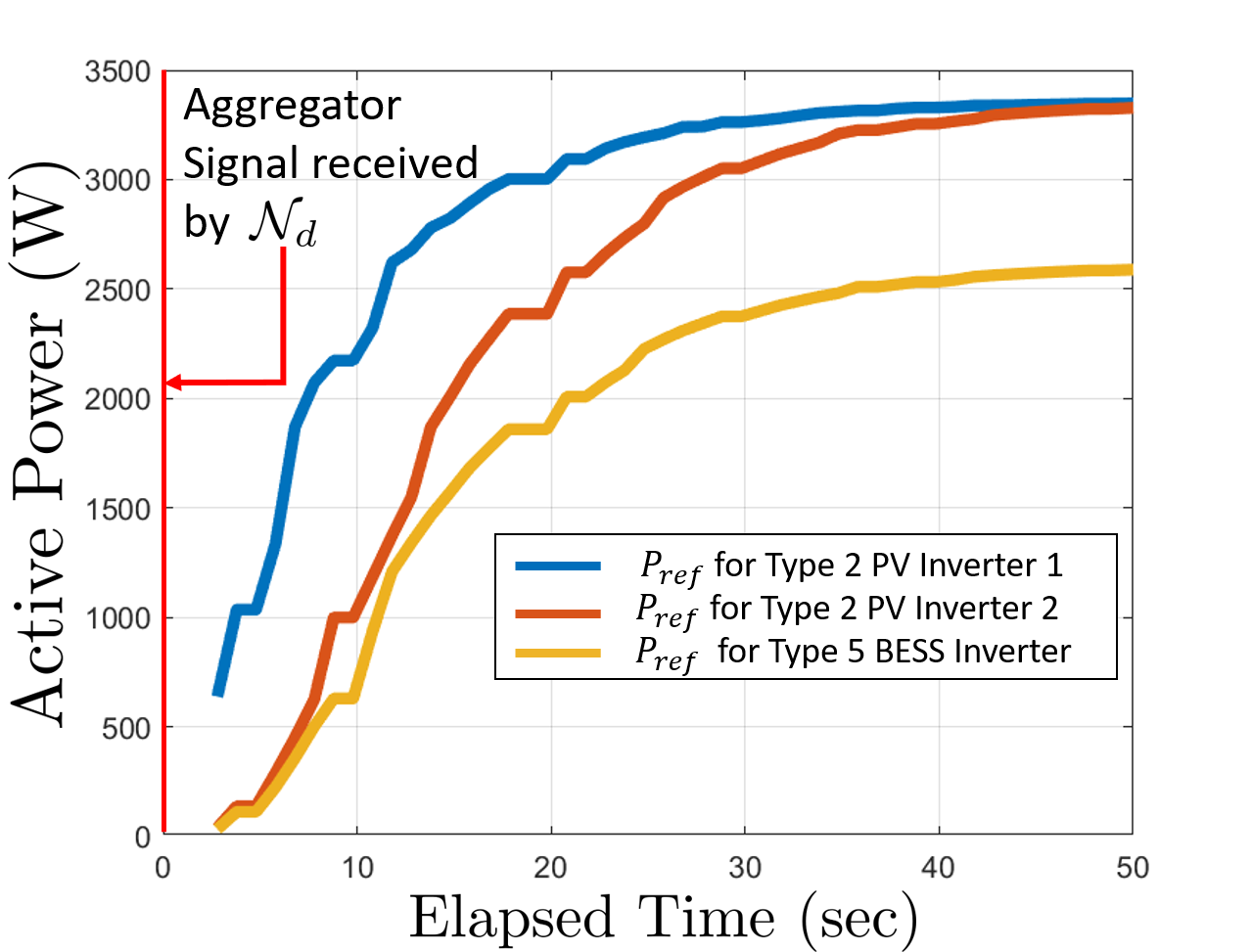}}
    \subfloat[]{\includegraphics[scale=0.21,trim={0.35cm 0cm 0.75cm 0.5cm},clip]{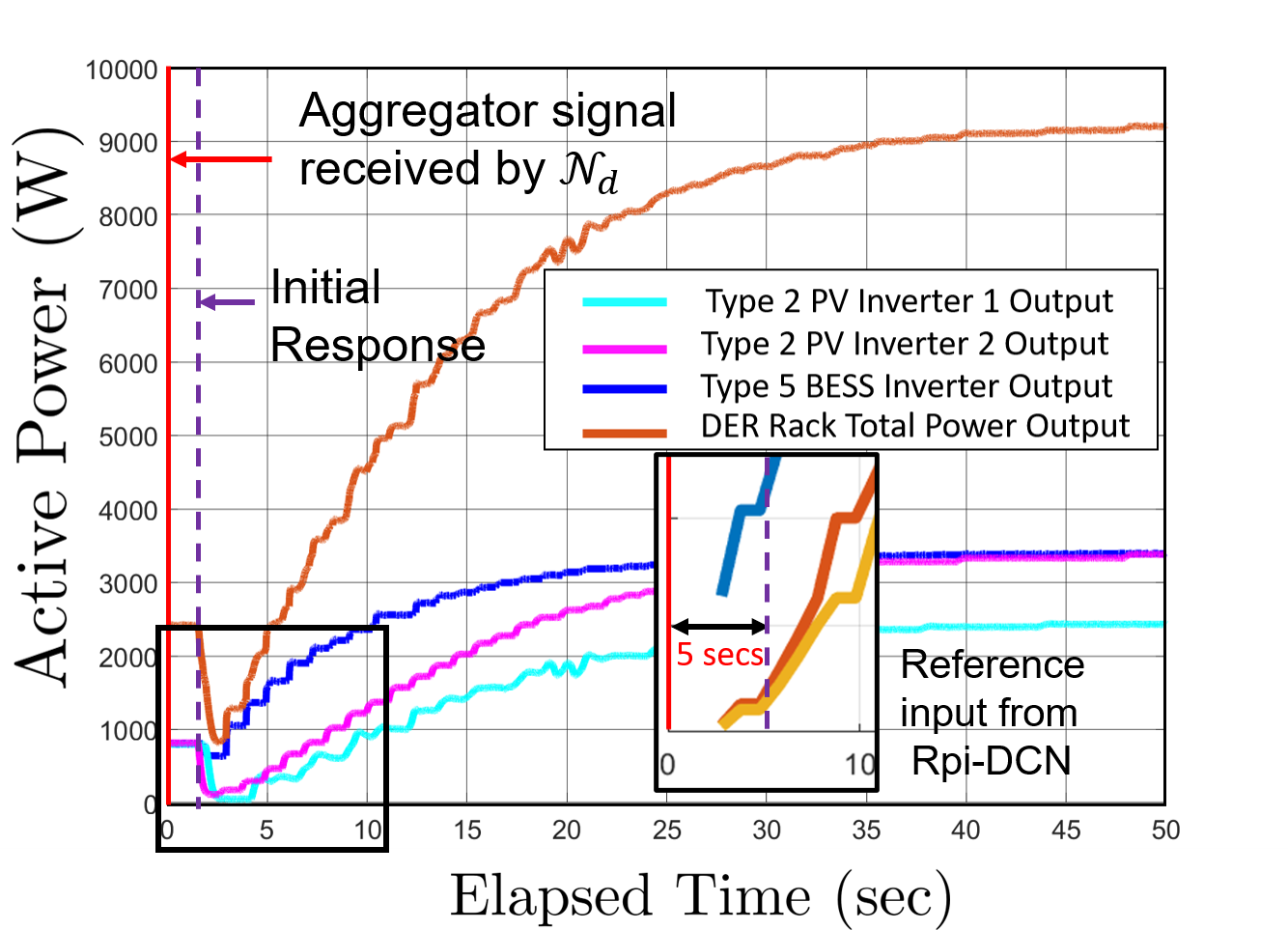}}
    \caption{(a) Power reference commands at Rpi-DCNs, (b) dispatched output power for DER Rack 1 inverters. }
    \label{fig:DER1power_noED}
\end{figure}

\begin{figure}[t]
    \centering
    \subfloat[]{\includegraphics[scale=0.2,trim={0.35cm 0cm 0cm 0cm},clip]{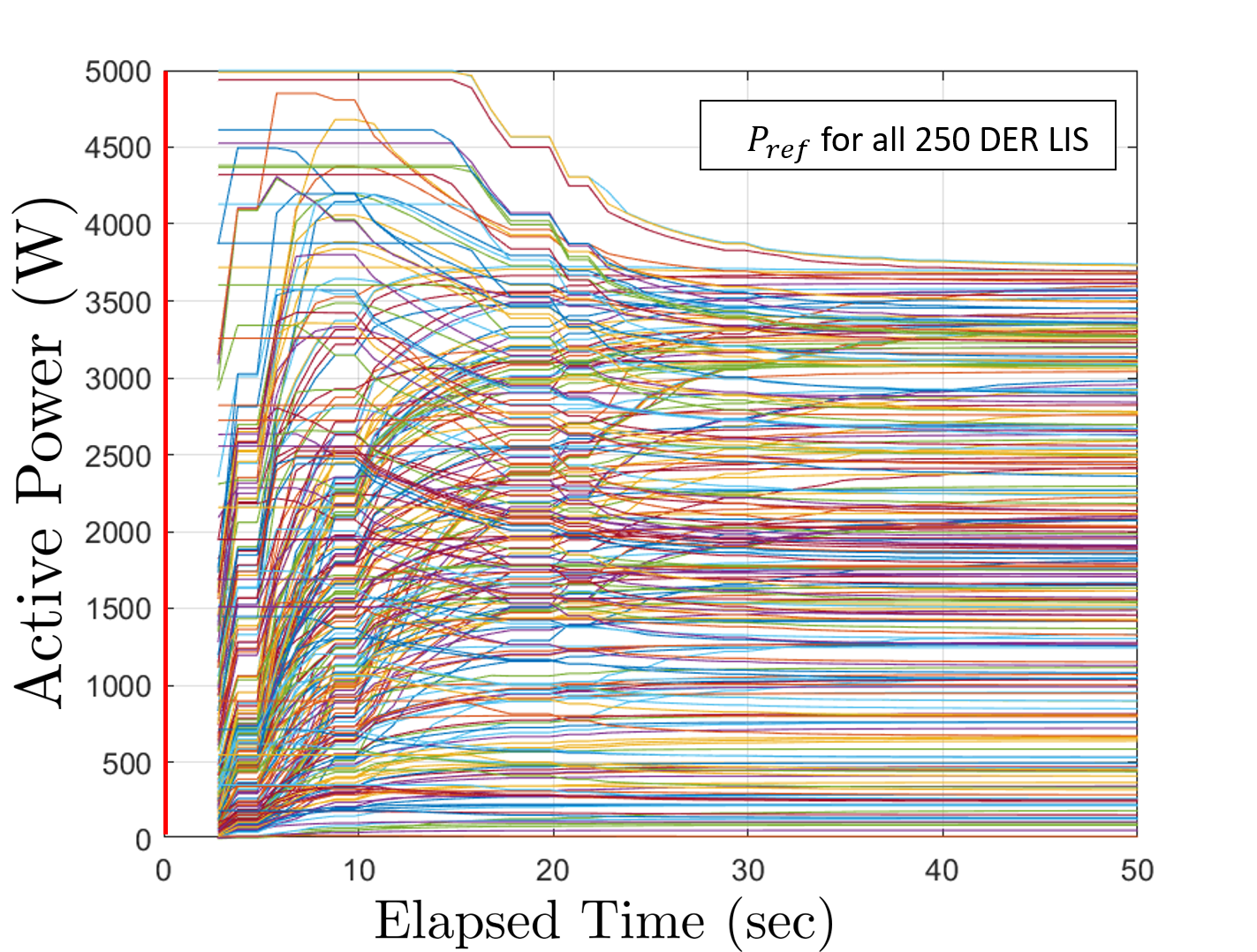}}
    \subfloat[]{\includegraphics[scale=0.21,trim={0.73cm 0cm 0cm 0cm},clip]{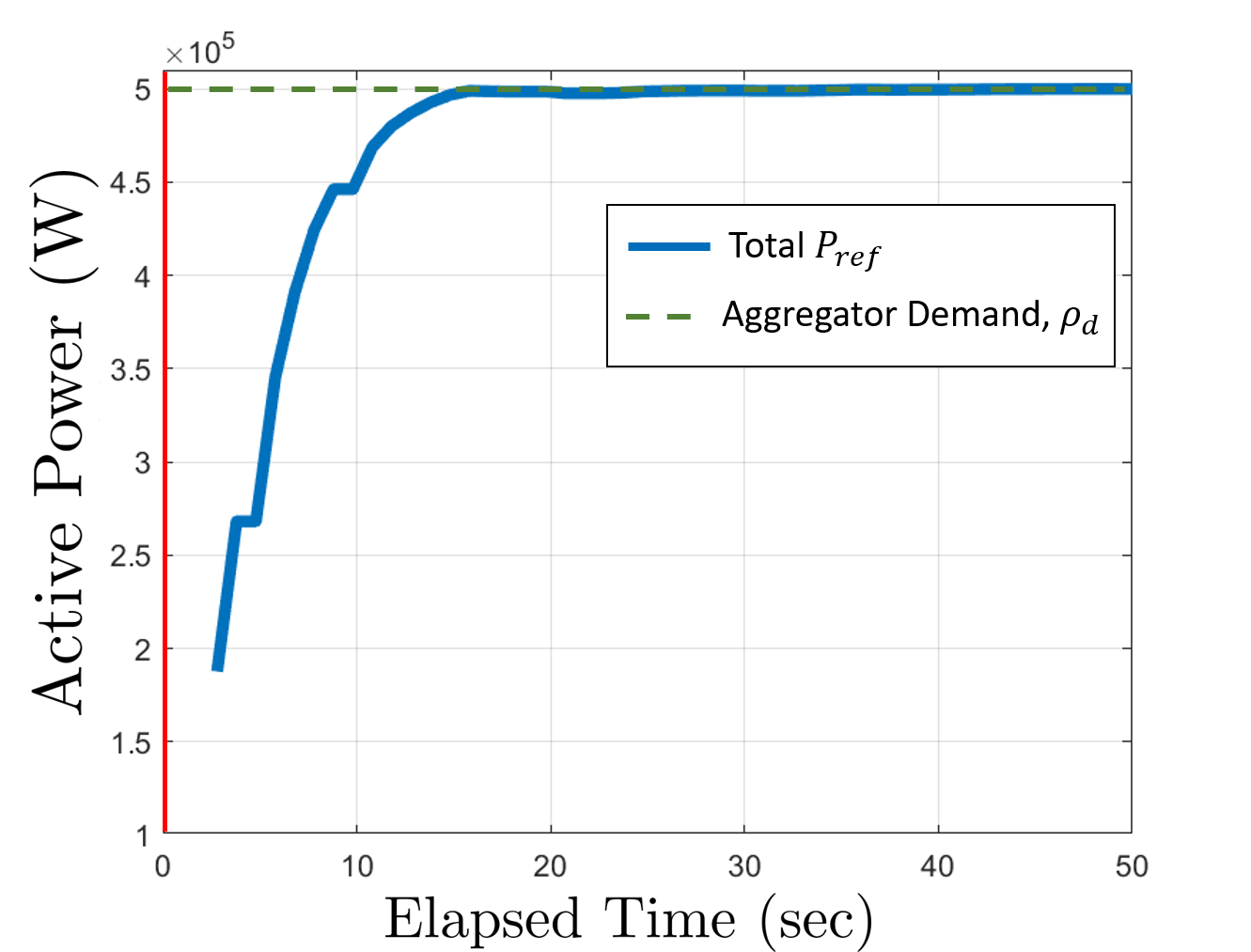}}
    \caption{(a) Reference power command updates for 250 DER units in the network, (b) Total reference power command from the DER network with $\Delta \rho_d = 275 \, kW$ and $\rho_d = 500$ kW.}
    \label{fig:DER_noED}
\end{figure}
\begin{figure}[t]
    \centering
    \includegraphics[scale=0.23]{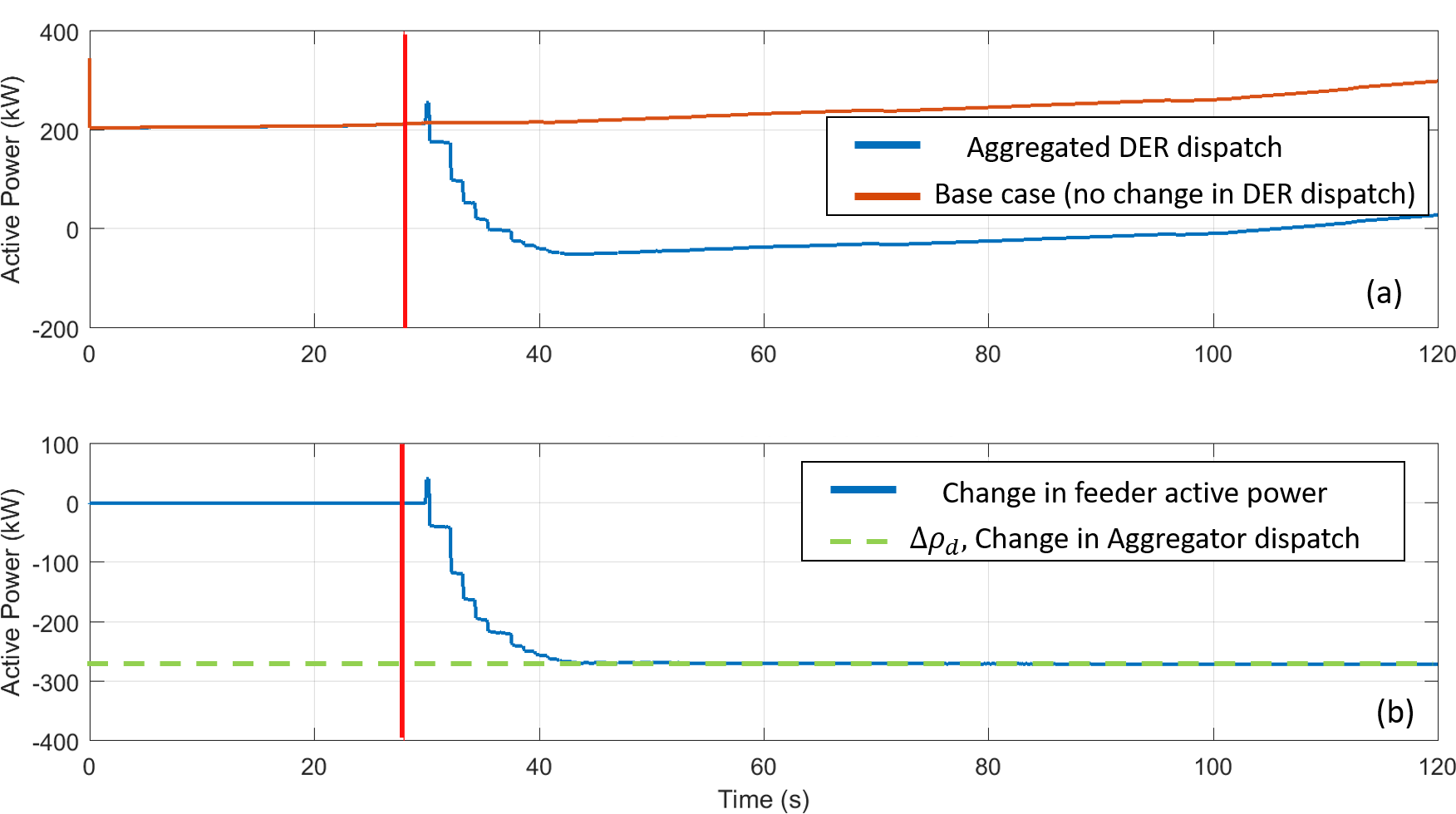}
    \caption{Active power consumed by distribution network with and without power apportioning protocol for Test Case-I.}
    \label{fig:subPower_noED}
    \end{figure}
\textit{Test Case II: Brown Start: DER Aggregation with Early Dispatch for Time-Varying Capacities and Aggregator command}\\

\begin{figure}[t]
    \centering
    \includegraphics[scale=0.2,trim={3.8cm 0cm 0cm 0cm},clip]{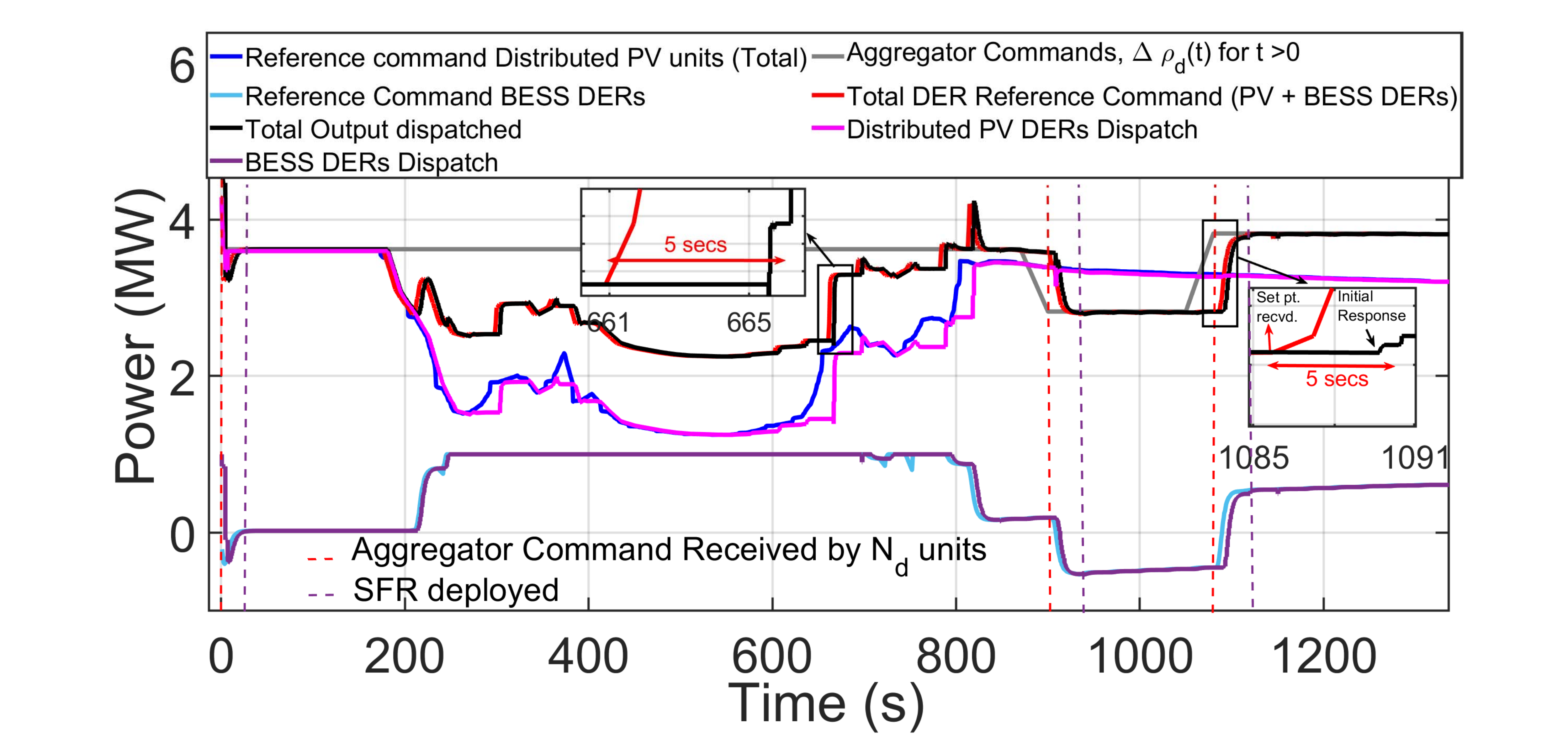}
    \caption{Reference power commands obtained at DCNs and output power dispatched for Test Case II.}
    \label{fig:PHIL_TC2}
\end{figure}
% \begin{figure}
%     \centering
%     \includegraphics[scale=0.18]{SubstationPower_voltages.eps}
%     \caption{Caption}
%     \label{fig:tation_withED}
% \end{figure}

\begin{figure}[t]
% \vspace{-0.6cm}
    \centering
     \includegraphics[scale=0.18, trim={1.8cm 0cm 2cm 0cm},clip]{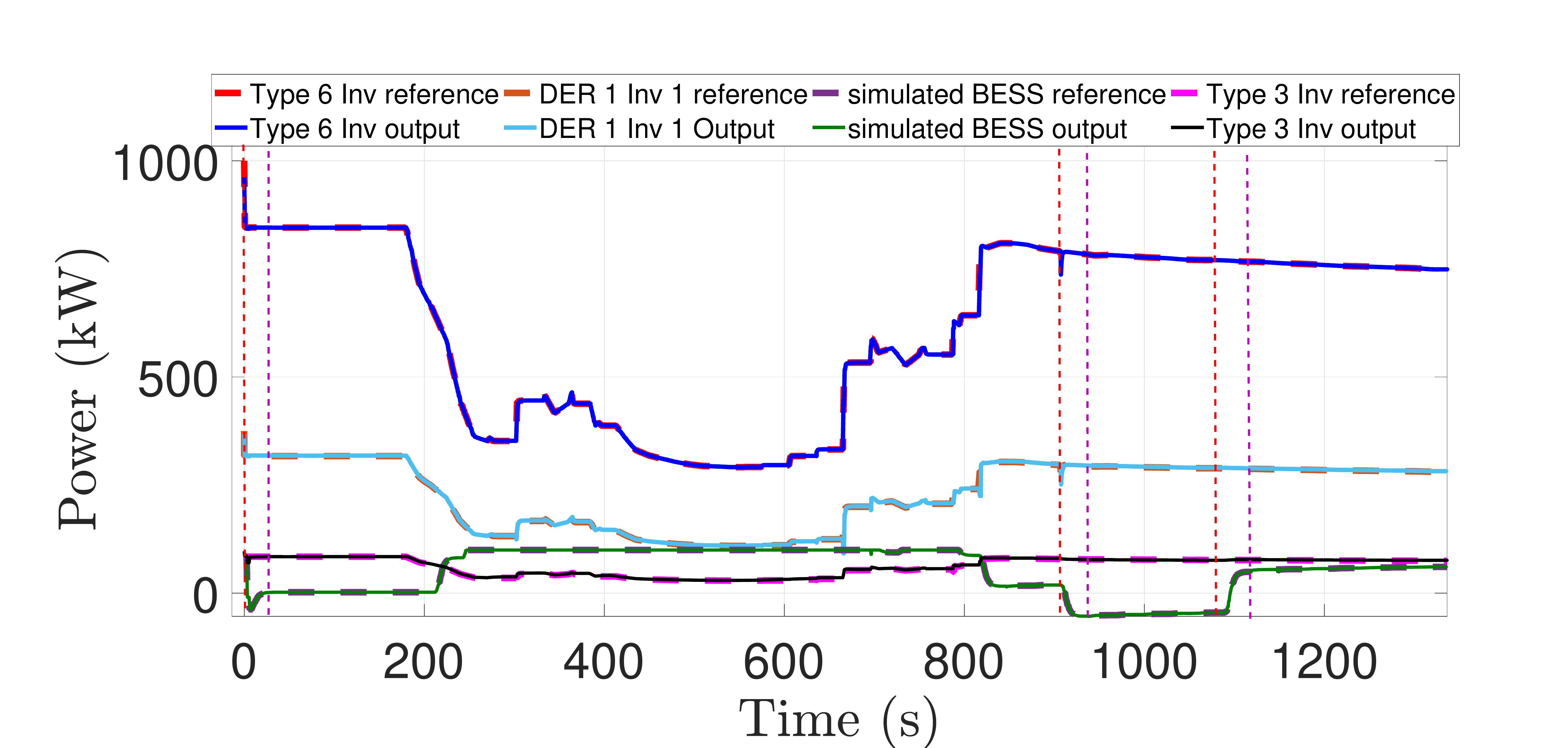}
    \caption{Reference Power command from Rpi-DCNs/sim-DCNs and scaled dispatched power from (selected) phLIS.}
    \label{fig:TC2_individualDER}
\end{figure}
%  While test case-I considered the case for a single dispatch due to a change in total power command once,.
 This test case emphasizes SFR due to continuous dispatches and variability in PV generation capability, requiring adjustments of minimum and maximum capacities of network DERs. In this scenario, the brown-start mechanism is implemented to achieve power apportioning due to changes in the system operator's command and/or changes in generation capacities of participating PV and BESS DERs. Here, each PV inverter follows an irradiance profile sampled every 30 seconds.
 Initially, at t = 0 s, $\mathcal{N}_d = 2$ (here Mobile DER Rack 1: Type 1 and Type 2 inverters) receive the initial aggregator demand $\rho_d(t_0) = 3.62~MW $ and within 5 s all the units in the network respond initially to meet the net command. A $\Delta \rho_d(t)$ of -0.82 MW at $t=900$ s and +1.01 MW at $t = 1080$ s are provided as aggregator commands to the network via $\mathcal{N}_d$ (see Fig. \ref{fig:PHIL_TC2}). 
At time periods,  $0 < t < 900$ s, $ 900<t< 1080 $ s, and $ t > 1080$ s, the brown-start and early dispatch mechanisms allow DERs to be dispatched to achieve feasible solutions to (\ref{eq:resApportioning}) in the presence of  variable solar irradiance and discharging/charging BESS DERs. The ramp response time (within $\pm 5\%$ of the steady-state value) to meet these commands by the aggregate DER network in the RT-DSM is $\sim 21 - 44$ s (to within SFR dispatch timescales) as shown by the purple dotted lines in Fig. \ref{fig:PHIL_TC2}.  Fig. \ref{fig:TC2_individualDER} shows the commanded power from physical DERs and simulated DERs in RT-DSM as well as the output response of selected physical hardware and simulated devices in the system. From $0 \le t \le 180$ s, because the total demand can be solely met by DERs with RES prioritization, due to the high irradiance that allows the distributed PV units to operate at their full capacity, the BESS units are not dispatched. For $180 < t < 800$ s, however, because of the decrease in irradiance, the net output power of all distributed PV units decreases, causing the algorithm to command BESS units in the network to dispatch their cumulative maximum capacity of ~1200 kW in order to minimize the difference in commanded and generated power. Similarly, at $t = 800$ s, when RES generation output was higher, BESS dispatch was adjusted back to meet the deficit. At $t = 900$ s, because the RES generation is higher than demanded $\rho_d(t)$, surplus PV generation is used to charge the BESS units. Finally, at $t = 1080$ s, both RES and BESS units are dispatched to meet and sustain the changed power commands of the aggregator with a ramp response time of less than 50 s. 
% Substation power consumption is shown in Fig. \ref{fig:my_substation_withED} to demonstrate the response of RT-DSM to changes in aggregated DERs dispatch and
Finally, the secondary-side feeder voltages were observed to have voltage deviations to be within $ \pm 5\%$.
% Note that the fast and coordinated dispatch of (simulated and physical hardware) BESS units was achieved distributedly, demonstrating the efficacy of the proposed finite time criteria to provide AS in the face of RES variability, for example, ERCOT energy markets for real-time and day-ahead market dispatch decisions.
% This scenario is designed specifically to demonstrate the point that the best performance from the coordinated aggregated DER units is achieved in the presence of variability and can support the grid through coordinated dispatch of BESS units. % propose a two-stage day-ahead unit commitment model to mitigate the variability due to RES. 
% We envisage that this framework along with existing methodologies, such as day-ahead market mechanisms for committing resources with variability \cite{xu2016coordinated} and demand response methods, can provide the technological leap required for providing proposed ancillary services at such fast timescales.

%%%%%%%%%%%%%%%%%%%%%%%%%%%%%%%%%%%%%%%%%%%%%%%%%%%%%%%%%%%%%%%%
% \vspace{-0.4cm}
\section{Conclusion}\label{sec:conclusion}
% \vspace{-0.2cm}
This article develops a scalable distributed framework for coordinating and aggregating large numbers of DERs to provide ancillary service support in the form of SFR to a bulk power system. The results of finite-time termination of ratio consensus were extended to propose a distributed power apportioning protocol to meet the time-varying aggregator command within a specified tolerance by participating DERs in the presence of bounded communication delays while prioritizing RES in the network. DER responses faster than state-of-the-art distributed approaches required for SFR services were achieved by implementing an early dispatch mechanism along with a brown-start approach to further improve DERs' ramping performance.
% Analytical results were derived to show that aggregator command is met within a desired tolerance distributively by all the participating DERs while respecting local capacity constraints in finite-time making this protocol viable for real-world implementation. 
% A novel PHIL test-bed was developed to validate the efficacy of the distributed algorithm at a scale that has not been achieved earlier.
%  Such a validation for distributed  DER  coordination  and  control at scale, to the best of our  knowledge, has not been achieved earlier.
% and validating the proposed protocol containing  40+  physical  hardware devices (generation), 48 Raspberry Pi agents, 202 simulated distributed controller nodes, varied communication protocols and an underlying real-world power system model.
Experimental results, validated on a unique PHIL testbed at scale, show that the required performance metrics were met (initial and ramp responses were achieved to be  less than $5$ seconds and $< 50$ seconds for the setup, respectively). 

% The proposed framework and demonstrations validate that coordination of DERs in distribution networks can be achieved on a sufficiently large scale to provide essential secondary frequency response services.

% conference papers do not normally have an appendix

% use section* for acknowledgment
% \section*{Acknowledgment}
% %\vspace{-0.25 cm}
% The authors acknowledge Advanced Research Projects Agency-Energy (ARPA-E) for supporting this research through the project titled \lq A Robust Distributed Framework for Flexible Power Grids\rq \ via grant no. DE-AR0000701 and Xcel Energy's Renewable Development Fund.
\vspace{-0.3cm}

\bibliographystyle{IEEEtran}
\bibliography{topident,biblio}

\appendices
% \section{Proof of the First Zonklar Equation}
% Can use something like this to put references on a page
% by themselves when using endfloat and the captionsoff option.
\ifCLASSOPTIONcaptionsoff
  \newpage
\fi

% references section

% biography section
% 
% If you have an EPS/PDF photo (graphicx package needed) extra braces are
% needed around the contents of the optional argument to biography to prevent
% the LaTeX parser from getting confused when it sees the complicated
% \includegraphics command within an optional argument. (You could create
% your own custom macro containing the \includegraphics command to make things
% simpler here.)
%\begin{IEEEbiography}[{\includegraphics[width=1in,height=1.25in,clip,keepaspectratio]{mshell}}]{Michael Shell}
% or if you just want to reserve a space for a photo:

% \begin{IEEEbiography}{Michael Shell}
% Biography text here.
% \end{IEEEbiography}

% % if you will not have a photo at all:
% \begin{IEEEbiographynophoto}{John Doe}
% Biography text here.
% \end{IEEEbiographynophoto}

% % insert where needed to balance the two columns on the last page with
% % biographies
% %\newpage

% \begin{IEEEbiographynophoto}{Jane Doe}
% Biography text here.
% \end{IEEEbiographynophoto}

% You can push biographies down or up by placing
% a \vfill before or after them. The appropriate
% use of \vfill depends on what kind of text is
% on the last page and whether or not the columns
% are being equalized.

%\vfill

% Can be used to pull up biographies so that the bottom of the last one
% is flush with the other column.
%\enlargethispage{-5in}

% that's all folks
\end{document}